\newtheorem{theorem}{Theorem}
\newtheorem{lemma}{Lemma}
\newtheorem{definition}{Definition}
\newcommand{\ind}{\mathrel{\perp\!\!\!\perp}}
\def\Var{{\textrm{Var}}\,}
\DeclareMathOperator{\E}{\mathrm{E}}
\DeclareMathOperator*{\R}{\mathbb{R}}
\DeclareMathOperator*{\N}{\mathbb{N}}
\begin{document}

\title{Optimal Redundancy in Exact Channel Synthesis}

\author{
  \IEEEauthorblockN{Sharang M.~Sriramu and Aaron B. Wagner\\}
  \IEEEauthorblockA{School of Electrical and Computer Engineering\\
                    Cornell University\\
                    Ithaca, NY 14853 USA\\
                    Email: \{sms579, wagner\}@cornell.edu}
}
        % <-this % stops a space

% The paper headers

\maketitle

\begin{abstract}
   We consider the redundancy of the exact channel synthesis problem under an i.i.d. assumption. Existing results provide an upper bound on the unnormalized redundancy that is logarithmic in the block length. We show, via an improved scheme, that the logarithmic term can be halved for most channels and eliminated for all others. For full-support discrete memoryless channels, we show that this is the best possible. 
\end{abstract}
\section{Introduction}
Quantization is an essential part of lossy compression.
The most common and direct method, which is sometimes referred to as hard quantization, is simply a deterministic mapping from the input space to a smaller, at most countable set. While this remains common, softer approaches that use stochastic mappings are preferable in certain applications. It has been well-established that, for example, using dithered quantization in audio and image compression leads to improvements in the perceptual quality of the reconstructions~\cite{roberts}. Another, more topical, use case for soft quantization comes from the field of Deep Neural Network (DNN)-based data compression, where hard quantization is problematic because its poor differentiability makes it incompatible with gradient-based training methods~\cite{balle2020nonlinear}. 

We consider the problem known as \emph{channel simulation} or \emph{channel synthesis}. This is a generalization of quantization---of both its soft and hard variants---and of source coding in general. In brief, it is the problem of using a unidirectional noiseless channel between two parties---a sender and a receiver---to simulate a noisy channel with the help of a source of common randomness. The sender receives samples from a source $X \sim p_X$ and encodes it into a prefix-free codeword $f(X,U)$ using the common randomness $U$. The receiver uses a decoder $g$ to recover the reconstruction $Y = g( f( X, U ), U )$ which we require be distributed according to $p_{Y|X}(Y|X)$. We are interested in finding schemes which need the shortest codewords on average---i.e. schemes with a minimal communication rate. Unlike source coding where we only stipulate that source and reproduction be close in terms of some measure of deviation, we require here that they follow a precise joint distribution. 

It is well-known that the required communication rate is at least the mutual information $I(X;Y)$~\cite{winter2002compression},
which is asymptotically achievable in the i.i.d. case.
The difference between the rate of a scheme and the mutual information is the \emph{redundancy.}
Zero redundancy is known to be achievable for particular channels, such as the Binary Erasure
Channel (BEC) and the uniform dither channel~\cite{zamirfeder}.
In the general case, Harsha et al. \cite{Harsha}, and Li and El Gamal \cite{LiElGamal} provide different one-shot schemes that both achieve the rate $I(X;Y) + \log( I(X;Y) + 1 ) + O(1)$\footnote{Throughout this paper we will use $\log(\cdot)$ to refer to the base-2 logarithm. We use $\ln$ instead to denote the natural logarithm. In similar vein, $\exp_2(\cdot)$ is used to denote $2^{(\cdot)}$, and $\exp(\cdot)$ to denote $e^{(\cdot)}$. } for an arbitrary channel. Both schemes follow the same general architecture---the common randomness is used to generate a large i.i.d. codebook containing different reconstruction strings, and the encoder stochastically selects a codeword and indexes it to the decoder. 
These schemes yield a redundancy of $\log n/n$ in the i.i.d. case, which is suboptimal
in general in light of the BEC and dither examples noted above.
% TODO: 1/2 log n/n redundancy

Despite the differences between the two schemes, the source of their excess redundancy in these examples is the same in some
sense: both schemes reveal unnecessary information about $X$ to the decoder through
the length of the codeword. We eliminate this inefficiency using a scheme
based on rejection sampling. Rejection sampling might seem an unlikely candidate
for developing a second-order-optimal scheme since, in its basic form, it is not even first-order
optimal~\cite{flamich2023greedy}. The key is to consider a two-stage scheme: first, we simulate the channel 
$X \rightarrow \Gamma$, where $\Gamma = \log \frac{p_{Y^n|X^n}(Y^n|X^n)}{p_{Y^n}(Y^n)}$. 
Then we simulate the conditional channel $p_{Y^n|X^n, \Gamma}$. Both stages are implemented
with standard rejection sampling. The codeword length is determined by $\Gamma$, so by
simulating $\Gamma$ in its own right we minimize leakage about $X$
through the codeword length. During the second-stage,
the first-order suboptimality of rejection sampling
is eliminated because the likelihood ratio $\Gamma$ is fixed.

For a class of channels known as \emph{non-singular}, the resulting scheme achieves a 
redundancy of $(1/2) \log n/ n$, i.e., half of what was previously known. For 
singular channels, the redundancy is sublogarithmic.
We also show that this dichotomy is fundamental by showing that for discrete, non-singular channels
without zeros, the redundancy $(1/2) \log n/ n$ cannot be improved.
Although this work does not identify new channels for which zero redundancy is possible,
it does identify a large class, which includes the BEC and uniform dither channels, for
which the redundancy is sublogarithmic.

Reducing the second-order gap in channel simulation has utility beyond completing our theoretical understanding of the problem. The cost of a naive implementation of a channel simulation scheme is tied to the size of the codebook required which, on average, grows exponentially with the dimension. Therefore, even a logarithmic sub-optimality in the rate becomes significant. This is particularly important in DNN-based compression where training is expensive.

\section{Background}
Wyner \cite{Wyner} studied a version of this problem with no common randomness and total variation approximate channel synthesis---i.e. the simulated channel needs to only be $\epsilon$-close to the target channel in total variational distance for some small $\epsilon$. The rate achievable for this problem in the asymptotic blocklength regime as $\epsilon$ vanishes was shown to be greater than the mutual information between the source and reconstruction random variables.\par
Winter \cite{winter2002compression} and studied the version of the problem with unlimited common randomness, and Bennett et  al. \cite{bennett2002entanglement} considered the exact channel synthesis problem under the same conditions. They show that the achievable rate is asymptotically equal to the mutual information.  Cuff \cite{Cuff} and Bennett et  al. \cite{bennett2014quantum} examined the tradeoff between the communication rate and the rate of common randomness in total-variation approximate channel synthesis, and Yu and Tan \cite{yu2019exact} examine the tradeoff for exact channel synthesis.  \par
Harsha et  al. \cite{Harsha} and Li and El Gamal \cite{LiElGamal} proved one-shot results for channel synthesis using different mechanisms. The former form used a modified version of rejection sampling process whereas the latter relied on properties of the Poisson process. \par
There have also been some recent work that has focused on creating computationally efficient algorithms for channel simulation ( \cite{flamich2020compressing}, \cite{flamich2022fast}, \cite{flamich2023adaptive}, \cite{flamich2023greedy} ). 
\section{Problem Setup}
Consider two Polish spaces, $\mathcal{X}$ and $\mathcal{Y}$, representing the source and reconstruction alphabets respectively. Let $p_X$, which we will refer to as the source measure, be a probability measure on $\mathcal{X}$, and let $p_{Y|X} : \mathcal{X} \rightarrow \mathcal{Y}$, which we will refer to as the channel, be a Markov kernel (See Faden \cite{regcondprob} for proof of existence). As per standard notation, we will use $p_Y$ and $p_{XY}$ to denote respectively the marginal and the joint measures induced by the pair $p_X, p_{Y|X}$. \par
We will assume throughout the paper that the mutual information is finite: i.e., $I(X;Y) < \infty$. This guarantees that the joint measure is absolutely continuous w.r.t. to the product of the marginals---$p_{XY} \ll p_X \times p_Y$. We then invoke the following Lemma from Polyanskiy and Wu \cite{YWu}:
\begin{lemma}{Lemma 3.3, \cite{YWu}}:
If $\mathcal{Y}$ is standard Borel, then
\begin{equation}
    p_{XY} \ll p_X \times p_Y \iff p_{Y|X=x} \ll p_Y \text{ for almost every } x.
\end{equation}
\end{lemma}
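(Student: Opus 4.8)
The plan is to prove the two implications separately; the reverse direction is elementary and the forward direction carries all the weight.

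For the direction ($\Leftarrow$), I would assume $p_{Y|X=x} \ll p_Y$ for $p_X$-almost every $x$ and take any measurable $A \subseteq \mathcal{X} \times \mathcal{Y}$ with $(p_X \times p_Y)(A) = 0$. Writing $A_x = \{\, y : (x,y) \in A \,\}$ for the $x$-section, Tonelli's theorem gives $\int_{\mathcal{X}} p_Y(A_x)\, p_X(dx) = 0$, so $p_Y(A_x) = 0$ for $p_X$-a.e.\ $x$; combining this with the absolute-continuity hypothesis yields $p_{Y|X=x}(A_x) = 0$ for $p_X$-a.e.\ $x$, and hence, by the disintegration of $p_{XY}$ through the kernel $p_{Y|X}$, $p_{XY}(A) = \int_{\mathcal{X}} p_{Y|X=x}(A_x)\, p_X(dx) = 0$. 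This argument uses no structural assumption on $\mathcal{Y}$.

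For the direction ($\Rightarrow$), assume $p_{XY} \ll p_X \times p_Y$ and let $f = dp_{XY}/d(p_X \times p_Y)$, which lies in $L^1(p_X \times p_Y)$ since $p_{XY}$ is a probability measure. Fix a Borel set $B \subseteq \mathcal{Y}$. Applying Fubini's theorem to $f \cdot \mathbf{1}_{\mathcal{X} \times B}$ shows that $x \mapsto g_B(x) := \int_B f(x,y)\, p_Y(dy)$ is measurable and that, for every measurable $A \subseteq \mathcal{X}$,
\[
\int_A p_{Y|X=x}(B)\, p_X(dx) = p_{XY}(A \times B) = \int_A g_B(x)\, p_X(dx),
\]
so $p_{Y|X=x}(B) = g_B(x)$ for $p_X$-a.e.\ $x$, with an exceptional null set $N_B$ depending on $B$.

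The main obstacle is upgrading this to a single null set on whose complement the identity holds simultaneously for all Borel $B$ — and this is exactly where standard Borel-ness is used. Since $\mathcal{Y}$ is standard Borel, it carries a countable algebra $\mathcal{A}$ (containing $\mathcal{Y}$ itself) that generates its Borel $\sigma$-algebra. Setting $N = \bigcup_{B \in \mathcal{A}} N_B$, which is still $p_X$-null, for every $x \notin N$ the two finite measures $B \mapsto p_{Y|X=x}(B)$ and $B \mapsto g_B(x)$ agree on $\mathcal{A}$, and in particular have the same total mass (take $B = \mathcal{Y}$), so by the $\pi$-$\lambda$ (monotone class) theorem they agree on all Borel sets. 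Thus for $p_X$-a.e.\ $x$ we have $p_{Y|X=x}(B) = \int_B f(x,y)\, p_Y(dy)$ for every Borel $B$, which exhibits $f(x,\cdot)$ as a density of $p_{Y|X=x}$ with respect to $p_Y$ and hence $p_{Y|X=x} \ll p_Y$. I expect the only delicate points to be the measurability of $g_B$ and the construction of the uniform null set via the countable generating algebra; the remaining steps are routine.
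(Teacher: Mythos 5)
The paper does not actually prove this lemma; it simply invokes it by citation to Polyanskiy and Wu, so there is no in-paper proof to compare against. Your proof is correct and self-contained, and it uses exactly the argument one would expect: the reverse direction is a direct Tonelli/disintegration computation that needs no structural assumption, while the forward direction proceeds via Fubini to identify $g_B(x)=\int_B f(x,y)\,p_Y(dy)$ as a version of $p_{Y|X=x}(B)$ for each fixed $B$, then exploits the countably generated Borel $\sigma$-algebra of a standard Borel $\mathcal{Y}$ to pass from a $B$-dependent null set to a single null set via a countable generating algebra and the $\pi$-$\lambda$ theorem. You correctly place $\mathcal{Y}$ itself in the generating algebra so that the two finite measures have matching total mass on the complement of the exceptional set, which is what licenses the uniqueness-of-extension step; this also implicitly ensures $f(x,\cdot)\in L^1(p_Y)$ for those $x$, so $g_B(x)$ is well-defined. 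The only thing I would add for completeness is a sentence noting that $x\mapsto p_{Y|X=x}(B)$ is measurable by definition of a Markov kernel, so the a.e.\ identification of it with the (Fubini-measurable) $g_B$ is legitimate; but this is cosmetic. In short: the paper cites the result, you proved it, and your proof is sound.
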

Using this, we observe that the R-D derivative $\frac{dp_{Y|X}}{dp_{Y}}( \cdot|x )$ is well-defined for almost every $x \in \mathcal{X}$.\par
We now provide a precise definition for the channel synthesis problem with unlimited common randomness.

\begin{definition} \label{definition:ExactChannelSynthesis}
    The common randomness is a random variable $U \ind X$ taking values in a set $\mathcal{U}$ with a probability distribution $p_U$. An ($n$-length) encoder is a map 
    \begin{equation}
f : \mathcal{X}^n \times \mathcal{U} \mapsto \{0,1\}^*,
    \end{equation}
    subject to the constraint that for any $u$, the range of $f(\cdot,u)$ is a prefix-free set of binary strings.
    An ($n$-length) decoder is a map 
    \begin{equation}
g : \{0,1\}^* \times \mathcal{U} \mapsto \mathcal{Y}^n.
    \end{equation}
    A tuple $(f, g, U)$ simulates the channel $p_{Y^n|X^n}$ if $(X^n,Y^n)$ is distributed i.i.d.\ $p_{X^n} p_{Y^n|X^n}$ where
    \begin{equation}
Y^n = g(f(X^n,U),U).
    \end{equation}
\end{definition}
We assume that the encoder and decoder are deterministic since any private randomness can be absorbed into $U$.\par
This objective can be trivially achieved if we simply allow the encoder to losslessly transmit the source realization $x^n$ it observes. However, this is sub-optimal in terms of the communication complexity---the more noise the channel we are trying to simulate contains, the less informative we need to be in what we communicate. For example, if $p_{Y|X}$ were to be a BEC with erasure probability $1$, or a BSC with flip probability $\frac{1}{2}$, the encoder need not send anything at all to simulate the channel perfectly. 
We are therefore interested in finding encoders and decoders that achieve the lowest possible communication complexity for a given channel:

\begin{definition}
The \emph{average rate} of an ($n$-length) code tuple $(f, g, U)$ that simulates the channel is
\begin{equation}
    R_n(f,g, U) =  \frac{1}{n}\E\Big[\ell(f(X^n, U))\Big], \label{eq:Rate}
    %  R_n(f,g) = \min\limits_{\Pi} R_n(\Pi) = \min\limits_{\Pi} \frac{1}{n}\E_{p_X,p_{U_n}}\Big[l(f(X^n, U_n))\Big], \label{eq:Rate}
\end{equation}
where $\ell(\cdot)$ is the length of a binary string. 
% The \emph{maximum rate} is 
% \begin{equation}
%     \overline{R}_n(f,g, U) =  \frac{1}{n} \max_{x^n \in \mathcal{X}^n} E\Big[\ell(f(x^n, U))\Big]. \label{eq:Rate:max}
%     %  R_n(f,g) = \min\limits_{\Pi} R_n(\Pi) = \min\limits_{\Pi} \frac{1}{n}\E_{p_X,p_{U_n}}\Big[l(f(X^n, U_n))\Big], \label{eq:Rate}
% \end{equation}
\end{definition}

\begin{definition}
The $n$-length minimum average rate is
\begin{equation}
    R_n = \inf_{f,g, U} R_n(f,g, U)
\end{equation}
% resp.,
% \begin{equation}
%     \overline{R}_n = \inf_{f,g, U} \overline{R}_n(f,g, U),
% \end{equation}
where the infimum is over all constructions of the common randomness $(\mathcal{U}, p_U)$ and $n$-length $f$ and $g$ that
simulate the channel.
\end{definition}
Our goal is to characterize $R_n$ for large $n$.
\section{Prior Results}
For any channel $p_{Y|X}$ and any input distribution $p_X$ it is known that~(\cite{Harsha})
\begin{equation}
    \lim_{n \rightarrow \infty} R_n = I(X;Y)
\end{equation}
In fact, the following one-shot results are known~\cite{LiElGamal}
\begin{equation}
I(X;Y) \le R_1 \le I(X;Y) + \log \Big( I(X;Y) + 1 \Big) + 5,
\end{equation}
which implies that 
\begin{equation}
I(X;Y) \le R_n \le I(X;Y) + \frac{\log n}{n} + o\Big( \frac{\log n}{n} \Big). \label{eq:PrevRedundancy}
\end{equation}
Braverman and Garg~\cite{braverman2014public} show by means of an example for each $n$ that the logarithmic term $\frac{\log n}{n}$ is necessary in general. However, their
example does not take the form of an i.i.d.\ channel. Thus
the optimum redundancy for i.i.d.\ channels is unknown but must be
nonnegative and 
at most $\log n/n$, ignoring 
lower-order terms.

\section{Results}

We first recall the concept of \emph{singularity} \cite{altuug2020exact} as our results bifurcate based on it leading to two different achievable rates:
\begin{definition}
    The distribution $p_{X,Y}$ is \emph{singular} if the 
    quantity $\log \frac{dp_{Y|X}}{dp_{Y}}(Y|X)$ is a deterministic function of $Y$---i.e.,
    \begin{equation}
        \frac{dP_{Y|X}}{dP_Y}(Y|X) = \E\left[ \frac{dP_{Y|X}}{dP_Y}(Y|X) | Y \right].
    \end{equation}
    Otherwise, it is \emph{nonsingular}.
\end{definition}
Although the condition formally depends on the joint distribution $P_{XY}$, it is 
essentially determined by the channel $P_{Y|X}$, so we will at times speak of singular
and non-singular channels rather than distributions.

For discrete channels, singularity translates to the condition that all inputs that lead to a given output do so with the same conditional likelihood---i.e., for all $y \in \mathcal{Y}$ and all $x_1, \, x_2 \in \mathcal{X}$ s.t $p_X(x_1)p_X(x_2) > 0$, 
    \begin{align}
        p_{Y|X}(y|x_1) p_{Y|X}(y|x_2) > 0 \implies p_{Y|X}(y|x_1) = p_{Y|X}(y|x_2).
    \end{align}
The binary symmetric channel (BSC)
\begin{equation}
p_{Y|X} = \left[ \begin{array}{ccc} 1 - \epsilon & \epsilon \\ \epsilon & 1-\epsilon \end{array} \right]
\end{equation}
is non-singular, as is the Gaussian channel.
On the other hand, the two channels mentioned in the introduction,
namely the binary erasure channel (BEC):
\begin{equation}
p_{Y|X} = \left[ \begin{array}{ccc} 1 - \epsilon & 0 & \epsilon \\ 0 & 1 - \epsilon & \epsilon \end{array} \right]
\end{equation}
and the uniform dither channel:
\begin{equation}
    p_{Y|X}(y|x) = \begin{cases}
        1 & \text{if $|y - x| \le 1/2$} \\
        0 & \text{otherwise.}
    \end{cases}
\end{equation}
are both singular. Our main result shows that this is not a coincidence.

\begin{theorem}[Achievability] \label{theorem:MainResultAchievability}
    If  the channel $p_{Y|X}$ is singular, then we have
    \begin{equation}
        \lim_{n \rightarrow \infty} \frac{R_n - I(X;Y)}{\log n/n} = 0.
    \end{equation}
    % If it is also symmetric, then additionally,
    % \begin{equation}
    %     \lim_{n \rightarrow \infty} \frac{ R_n - I(X;Y)}{\log n/n} \ge \frac{1}{2}.
    % \end{equation}
    On the other hand if  $p_{Y|X}$ is non-singular, then we have
    \begin{equation}
        \lim_{n \rightarrow \infty} \frac{ R_n - I(X;Y)}{\log n/n} \le \frac{1}{2}.
    \end{equation}
\end{theorem}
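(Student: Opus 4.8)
The plan is to prove both claims with a single two‑stage scheme and to reduce the whole redundancy question to estimating one conditional entropy. Write $\gamma(x,y)=\log\frac{dp_{Y|X=x}}{dp_Y}(y)$ for the single‑letter information density, so that $\Gamma=\sum_{i=1}^n\gamma(X_i,Y_i)=\log\frac{dp_{Y^n|X^n}}{dp_{Y^n}}(Y^n|X^n)$ and $\E[\Gamma]=nI(X;Y)$. Fix a quantization step, and let $\hat\Gamma$ be $\Gamma$ rounded down to a multiple of that step and truncated to a typical window around $nI(X;Y)$ (off this window one runs the one‑shot scheme of \cite{LiElGamal}, paying $\Gamma+O(\log(\Gamma+1))$, which contributes negligibly on average; a standard perturb‑and‑correct device keeps the overall simulation exact). \emph{Stage A:} simulate the channel $X^n\to\hat\Gamma$ with the one‑shot scheme of \cite{Harsha}/\cite{LiElGamal}, so that both parties learn a sample of $\hat\Gamma$ distributed as $p_{\hat\Gamma|X^n}$; its expected length is at most $I(X^n;\hat\Gamma)+\log(I(X^n;\hat\Gamma)+1)+O(1)$, and since $\hat\Gamma$ takes only $\mathrm{poly}(n)$ values, $I(X^n;\hat\Gamma)\le H(\hat\Gamma)=O(\log n)$, so the correction is $O(\log\log n)$. \emph{Stage B:} now both parties know $\hat\Gamma$ and the encoder knows $X^n$; using the common randomness they draw an i.i.d.\ stream from the (now known) proposal $p_{Y^n|\hat\Gamma=\hat\gamma}$, the encoder runs ordinary rejection sampling against the target $p_{Y^n|X^n=x^n,\hat\Gamma=\hat\gamma}$ and transmits the index $N$ of the accepted sample; composing the two stages reproduces $p_{X^n,Y^n}$ exactly.

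The reason for simulating $\hat\Gamma$ first is that, on the event $\{\hat\Gamma=\hat\gamma\}$, the ratio $\frac{dp_{Y^n|X^n}}{dp_{Y^n}}$ equals $2^{\hat\gamma}$ up to the quantization step, so the Stage‑B acceptance probability is $2^{-\hat\gamma}$ times a factor the decoder can recover from $\hat\gamma$ up to $O(1)$. After restricting the sampler to a typical set of proposals this makes $\log N$ concentrate, removes the first‑order loss that would doom ordinary rejection sampling for $X^n\to Y^n$, and lets $N$ be sent with a code whose length the decoder essentially knows in advance, so prefix‑freeness costs only $O(1)$ rather than the usual $\Theta(\log n)$. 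Computing the essential supremum of the conditional likelihood ratio and taking expectations gives $\E[\text{Stage-B length}]=I(X^n;Y^n\mid\hat\Gamma)+o(\log n)$. Summing the two stages, the expected codeword length is $I(X^n;\hat\Gamma)+I(X^n;Y^n\mid\hat\Gamma)+o(\log n)=I\bigl(X^n;(\hat\Gamma,Y^n)\bigr)+o(\log n)$, and since $\hat\Gamma$ is a deterministic function of $(X^n,Y^n)$ the chain rule yields
\[
I\bigl(X^n;(\hat\Gamma,Y^n)\bigr)=I(X^n;Y^n)+H(\hat\Gamma\mid Y^n)=nI(X;Y)+H(\hat\Gamma\mid Y^n).
\]
Hence $R_n\le I(X;Y)+\tfrac1n H(\hat\Gamma\mid Y^n)+o\bigl(\tfrac{\log n}{n}\bigr)$, and everything reduces to bounding the leakage term $H(\hat\Gamma\mid Y^n)$.

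If $p_{X,Y}$ is singular then $\Gamma=\sum_i\gamma(X_i,Y_i)$ — hence also $\hat\Gamma$ — is a deterministic function of $Y^n$, so $H(\hat\Gamma\mid Y^n)=0$ and $R_n\le I(X;Y)+o(\log n/n)$; combined with the trivial bound $R_n\ge I(X;Y)$ this gives the first limit. If $p_{X,Y}$ is non‑singular then, conditionally on $Y^n=y^n$, the sum $\sum_i\gamma(X_i,y_i)$ is a sum of $n$ independent variables of total conditional variance $\Theta(n)$, so the maximum‑entropy bound for (essentially) lattice variables gives $H(\hat\Gamma\mid Y^n=y^n)\le\tfrac12\log n+O(1)$ uniformly; hence $H(\hat\Gamma\mid Y^n)\le\tfrac12\log n+O(1)$ and $R_n\le I(X;Y)+\tfrac12\tfrac{\log n}{n}+o\bigl(\tfrac{\log n}{n}\bigr)$, which is the second claim. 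Non‑discrete alphabets and unbounded $\gamma$ are accommodated by the usual truncation and uniform‑integrability arguments.

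The main obstacle is making the Stage‑B analysis rigorous. Three points must be controlled: (a) since $\frac{dp_{Y^n|X^n}}{dp_{Y^n}}$ can be unbounded, the rejection sampler must be run on a typical set with a negligible‑rate fallback, and one must check the induced perturbation of the output law can be corrected without disturbing exactness; (b) one must show the first‑order penalty $\log\bigl(\operatorname{ess\,sup}\tfrac{dp_{Y^n|X^n=x^n,\hat\Gamma=\hat\gamma}}{dp_{Y^n|\hat\Gamma=\hat\gamma}}\bigr)-D\bigl(p_{Y^n|X^n=x^n,\hat\Gamma=\hat\gamma}\,\big\|\,p_{Y^n|\hat\Gamma=\hat\gamma}\bigr)$ is $o(\log n)$ on the typical set — this is exactly where the singular/non‑singular dichotomy is felt, through the conditional spread of $\gamma(X^n,y^n)$ given $Y^n=y^n$; and (c) one must confirm that the prefix code for the rejection index leaks only $o(\log n)$ beyond $\hat\Gamma$, so that the redundancy is genuinely governed by $H(\hat\Gamma\mid Y^n)$ and the $\tfrac12\log n$ versus $o(\log n)$ split emerges.
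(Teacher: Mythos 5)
Your overall architecture is the paper's own: fix (a quantized) $\Gamma$ first, then simulate $p_{Y^n\mid X^n,\Gamma}$ by rejection sampling, and reduce the redundancy to the entropy of $\Gamma$. Your identity $I(X^n;\hat\Gamma)+I(X^n;Y^n\mid\hat\Gamma)=I(X^n;Y^n)+H(\hat\Gamma\mid Y^n)$ is correct, and the singular case ($H(\hat\Gamma\mid Y^n)=0$) goes through. The gap is in Stage B for non‑singular channels. Expanding the acceptance ratio,
\[
\frac{dp_{Y^n\mid X^n,\hat\Gamma}}{dp_{Y^n\mid\hat\Gamma}}(y^n\mid x^n,\hat\gamma)
=\frac{1}{p_{\hat\Gamma\mid Y^n}(\hat\gamma\mid y^n)}\cdot\frac{dp_{Y^n\mid X^n}}{dp_{Y^n}}(y^n\mid x^n)\cdot\frac{p_{\hat\Gamma}(\hat\gamma)}{p_{\hat\Gamma\mid X^n}(\hat\gamma\mid x^n)}
\]
on the support of the target. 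The second and third factors are controlled (the second is $2^{\hat\gamma+O(\Delta)}$ by construction, the third is pinned down by the auxiliary variables), but the first is \emph{not} "a factor the decoder can recover from $\hat\gamma$ up to $O(1)$": it depends on $y^n$, and ordinary rejection sampling pays the essential supremum over $y^n$ of this ratio, not its mean. For a non‑singular channel, $p_{\hat\Gamma\mid Y^n}(\hat\gamma\mid y^n)$ is genuinely random, and for $y^n$ in the target's support that are only marginally $\epsilon$‑typical, $\E[\hat\Gamma\mid Y^n=y^n]$ can deviate from $\hat\gamma$ by $\Theta(\epsilon n)$ while the conditional standard deviation is only $\Theta(\sqrt n)$, making $-\log p_{\hat\Gamma\mid Y^n}(\hat\gamma\mid y^n)=\Theta(\epsilon^2 n)$. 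Hence $\log M$ is linear in $n$, not $\tfrac12\log n+o(\log n)$. Clipping the proposal cannot fix this: to bring the ceiling down to $O(\log n)$ you must exclude all $y^n$ with $p_{\hat\Gamma\mid Y^n}(\hat\gamma\mid y^n)\lesssim n^{-O(1)}$, but $p_{Y^n\mid X^n,\hat\Gamma}$ places $\gtrsim n^{-1/2}$ mass on the excluded set, so the "send $y^n$ verbatim when atypical" fallback already costs $\Theta(\sqrt n)$ bits in expectation, swamping the target redundancy. (BSC happens to be benign because $p_{\Gamma\mid Y^n}(\gamma\mid y^n)$ is constant in $y^n$, so testing only on that example masks the problem.)

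This is exactly why the paper chooses a \emph{different} proposal in the non‑singular case: it uses $W=p_{Y^n}$ rather than $p_{Y^n\mid\Gamma}$, so the ratio becomes $\frac{dp_{Y^n\mid X^n,\Gamma}}{dp_{Y^n}}=\frac{p_{\Gamma\mid X^n,Y^n}}{p_{\Gamma\mid X^n}}\cdot\frac{dp_{Y^n\mid X^n}}{dp_{Y^n}}$, in which the problematic $1/p_{\Gamma\mid Y^n}$ factor never appears and the ceiling $2^{\gamma+\overline{\gamma}_1+2\Delta}$ is controlled exactly. The price is that the redundancy becomes $H(\Gamma)$ instead of $H(\Gamma\mid Y^n)$; since both are $\tfrac12\log n+O(1)$, the asymptotic statement is the same, but only the paper's choice of proposal admits a clean ceiling bound. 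So your final numerology is right, and the singular branch of your argument is sound, but the non‑singular branch as written does not establish the claimed Stage‑B cost and needs the proposal change (or an argument you have not supplied) to close the gap.
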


We prove this asymptotic, i.i.d. result via a one-shot achievability
bound for general distributions (Lemma~\ref{lemma:RateAll} to follow),
which might be useful in other contexts. Note that the latter must
improve upon existing one-shot achievability bounds~\cite{LiElGamal},~\cite{Harsha} in 
certain regimes since it is asymptotically better for every i.i.d. channel.

The existing bounds~\cite{LiElGamal},~\cite{Harsha} can be improved to yield a redundancy of $\frac{\log n}{2n}$ for all i.i.d. channels by changing the mechanism of transmitting the codeword length to account for the fact that it concentrates in a $O(\sqrt{n})$ interval as per the central limit theorem. This matches Theorem \ref{theorem:MainResultAchievability} for nonsingular channels, but underperforms for singular channels.

From~(\ref{eq:PrevRedundancy}) it is clear that the result for singular channels
cannot be improved. We next show that in the case of full-support
finite-alphabet distributions,
the result for non-singular channels cannot be improved either.

\begin{theorem}[Converse] \label{theorem:MainResultConverse}
    Consider a source $p_{X}$ and non-singular channel $p_{Y|X}$ on a discrete alphabet $\mathcal{X} \times \mathcal{Y}$ s.t. $p_{XY}(x, y) > 0$ for all $x \in \mathcal{X}$ and $y \in \mathcal{Y}$. Then we have,
    \begin{equation}
        \lim_{n \rightarrow \infty} \frac{ R_n - I(X;Y)}{\log n/n} \ge \frac{1}{2}.
    \end{equation}
\end{theorem}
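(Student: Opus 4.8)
The plan is to lower-bound $R_n$ by relating the expected codeword length to an entropy and then showing that any valid scheme must ``leak'' roughly $\frac{1}{2}\log n$ bits beyond $nI(X;Y)$ because of fluctuations in the information density. Fix a scheme $(f,g,U)$ that simulates $p_{Y^n|X^n}$. Since for each fixed $u$ the map $f(\cdot,u)$ has prefix-free range, Kraft's inequality gives $\E[\ell(f(X^n,U))] \ge H(f(X^n,U)\mid U)$. Conditioned on $U=u$, the codeword $f(X^n,u)$ together with $u$ determines $Y^n = g(f(X^n,u),u)$; hence $H(f(X^n,U)\mid U) \ge H(Y^n\mid U) = H(Y^n)$ (using $U \ind X^n$, which forces $U \ind Y^n$ only marginally — care is needed here, see below) and also $H(f(X^n,U)\mid U)\ge I(X^n; f(X^n,U)\mid U) \ge I(X^n;Y^n\mid U)$. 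The crude bound $I(X^n;Y^n\mid U)\ge I(X^n;Y^n) = nI(X;Y)$ only recovers the first-order term, so the heart of the argument must extract the extra $\frac{1}{2}\log n$.

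The mechanism for the extra term is that the codeword must convey enough about $X^n$ to reproduce $Y^n$ exactly, and in particular it must (essentially) convey $\Gamma_n := \log \frac{p_{Y^n|X^n}(Y^n|X^n)}{p_{Y^n}(Y^n)}$, or at least a quantity correlated with it, because the decoder needs to produce a $Y^n$ whose conditional law matches. I would make this precise as follows. Let $M = f(X^n,U)$. The exact-simulation constraint means that, conditioned on $U$, the pair $(X^n, M)$ induces via $g$ the correct joint law of $(X^n,Y^n)$. Consider $\ell(M)$: it is a function of $(X^n,U)$ taking nonnegative integer values, and $\E[\ell(M)] = R_n \cdot n$. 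The key inequality is a converse bound of the ``meta-converse'' or change-of-measure type: for any binary prefix-free code, $\E[\ell(M)] \ge H(M\mid U) \ge H(Y^n \mid U, T)$ for any side information $T$, but more usefully I would argue that $M$ must distinguish source sequences that the channel sends to statistically distinguishable outputs. Concretely, I expect the argument to run through a one-shot converse of the form $R_1 \ge I(X;Y) + \E[\log(\text{something involving the spread of } \Gamma)]$, and then apply it to the $n$-fold channel where $\Gamma_n$ has standard deviation $\Theta(\sqrt n)$ by the CLT (non-singularity guarantees $\Var(\Gamma_1) > 0$, since singularity is precisely the condition that $\Gamma_1$ is a function of $Y$ alone — wait, that is not quite the same; non-singularity must be shown to imply $\Gamma_1$ is genuinely random given $Y^n$, which is the point where the hypothesis is used). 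The full-support finite-alphabet assumption should be used to guarantee uniform integrability and to control error terms in the CLT/Berry--Esseen estimates, and to rule out degenerate behavior of the R-D derivative.

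I anticipate the main obstacle is the core one-shot converse lemma that produces the $\frac{1}{2}\log$ factor: showing that exact synthesis forces $\E[\ell(M)] \ge I(X;Y) + \frac{1}{2}\log \Var(\Gamma) - O(1)$, or the appropriate $n$-letter analog $\E[\ell(M^n)] \ge nI(X;Y) + \frac12 \log n - O(1)$. The difficulty is that the codeword length $\ell(M)$ need not itself encode $\Gamma_n$; the scheme could in principle use common randomness cleverly. I would address this by the following structural observation: given $U=u$, the decoder partitions $\{0,1\}^*$ into the codeword set, and the induced distribution on $Y^n$ given $X^n = x^n$ (averaged appropriately) must equal $p_{Y^n|X^n}(\cdot|x^n)$; a counting/volume argument on the codeword lengths — comparing $\E[2^{-\ell(M)}]$-type quantities against the typical value of $p_{Y^n}(Y^n)$ and $p_{Y^n|X^n}(Y^n|X^n)$ — then yields the bound, with the $\sqrt n$ spread of $\Gamma_n$ entering precisely because the relevant typical set has probability bounded away from $0$ and $1$ only on a $\Theta(\sqrt n)$-wide shell. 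The final step is routine: invoke $\frac1n(nI(X;Y) + \frac12\log n - O(1) - nI(X;Y)) / (\log n / n) = \frac12 + o(1)$, take $\liminf$, and conclude.
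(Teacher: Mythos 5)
Your opening chain of inequalities,
\begin{equation}
nR_n \ge H(f(X^n,U)\mid U) \ge I(X^n;f(X^n,U)\mid U) \ge I(X^n;Y^n\mid U) = H(X^n) - H(X^n\mid Y^n,U),
\end{equation}
is exactly how the paper begins, and your diagnosis of where the work lies (extracting the extra $\tfrac12\log n$ from the last expression) is also correct. But from there the proposal stops being a proof and becomes a plan. You explicitly flag the ``core one-shot converse lemma'' producing the $\tfrac12\log n$ as the main obstacle, and you do not supply it; the suggested ``counting/volume argument comparing $\E[2^{-\ell(M)}]$-type quantities against the typical value of $p_{Y^n}$'' is not carried out, and it is not clear that it can be, because the converse must hold for schemes that do not reveal $\Gamma_n$ through the codeword or its length. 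Your own caveat (``the scheme could in principle use common randomness cleverly'') is precisely the issue: the intuition ``the codeword must essentially convey $\Gamma_n$'' is not a theorem, and the proposal gives no mechanism that forces an arbitrary exact-simulation scheme to pay for the $\Theta(\sqrt n)$ spread of $\Gamma_n$. Incidentally, the sub-worry you raise about whether non-singularity guarantees $\Gamma_1$ is genuinely random is a non-issue: singularity is by definition the statement that the log-likelihood ratio is a deterministic function of $Y$, so non-singularity is exactly the needed positivity of that conditional variance. The real gap is the missing lemma, not this.

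The paper's argument proceeds quite differently and does not reason about the codeword or its length at all beyond the first display above. It conditions on the type $T(X^n)$, writing $H(X^n\mid Y^n,U) \le H(T(X^n)) + H(X^n\mid Y^n,U,T(X^n))$, bounds $H(T(X^n)) \le \frac{|\mathcal{X}|-1}{2}\log n + O(1)$, and then---this is the replacement for your missing lemma---bounds $H(X^n\mid Y^n,U,T(X^n))$ by a \emph{maximum-entropy problem over subsets of a fixed type class} subject to a constraint on the expected log-loss distortion $-\frac1n\log p_{Y^n|X^n}(y^n\mid Z^n)$. Using type-counting estimates from Zhang, Yang, and Wei's work on rate-distortion redundancy (Lemma~\ref{Lemma:RDTypeResults}), the size of the relevant distortion ball inside a type class is $nH_u - nH_{\mathcal{Y},n}(r) - \frac{|\mathcal{X}|}{2}\log n + o(\log n)$, and this $-\frac{|\mathcal{X}|}{2}\log n$ is where the gain comes from: it overcompensates the $+\frac{|\mathcal{X}|-1}{2}\log n$ cost of conditioning on the type, netting $+\frac12\log n$ after the mutual-information cancellation. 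Non-singularity and the full-support assumption are what make the anchor $(p_Y,p_X,d_0)$ with $d_0 = H(Y|X)-\kappa_t$ lie strictly between $d_{\min}$ and $d_{\max}$, which is required for Lemma~\ref{Lemma:RDTypeResults} to apply. So the paper converts the problem into a covering/type-class-cardinality estimate on the posterior $p_{X^n\mid Y^n,U}$, whereas your route would require a genuinely new converse about code structure that is not present in the proposal.
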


In particular, we see that the dichotomy between singular and non-singular
channels in Theorem~\ref{theorem:MainResultAchievability}, 
which also arises in other contexts~\cite{altuug2020exact}, is 
fundamental.
Note that a non-singular joint distribution necessarily has positive mutual 
information. 

%In section \ref{section:Achievability}, we provide the proof for Theorem \ref{theorem:MainResult}.

\section{Achievability} \label{section:Achievability}
The fundamental idea is that 
given a source realization $x$,
we first use the common randomness to fix a quantized realization of
the log-likelihood ratio (LLR)
\begin{align}
&\tilde{\Gamma} =  \log \frac{dp_{Y^n|X^n}}{dp_{Y^n}}(Y^n|X^n), \text{ and}\\
&\Gamma = Q( \tilde{\Gamma} ), \text{ where } Q \text{ is a uniform quantizer }
\label{eq:LLR}
\end{align}
conditioned on $x$.\\ \\
For a Gallager-symmetric (See Gallager \cite{gallager1968information}) channel with a uniform input, this can be achieved with no additional communication cost, because in that case, the quantity in
(\ref{eq:LLR}) is independent of $X$. In general, however, it requires solving 
a separate instance of the channel simulation problem, where the channel $p_{\Gamma|X}$ is 
simulated between the encoder and decoder. This comprises the preliminary
stage, which we will refer to as the \emph{auxiliary scheme}. The \emph{primary scheme}, which is the next stage, is to simulate the channel conditional on the LLR---$p_{Y|X, \Gamma}$.

Both the primary scheme and the auxiliary scheme hinge on the classical rejection sampling algorithm, which is reviewed in the next sub-section.

In the next subsections, we provide a detailed description of the new scheme, prove its validity, and characterize its rate. We begin by describing the standard rejection-sampling procedure as that will be an important building block of our scheme.

\subsection{Rejection Sampling Algorithm} \label{section:RejectionSampling}
We begin by considering the problem of generating a sample from the distribution $P^*(x)$ given an infinite codebook $\mathcal{C} = \{ X_1, X_2, X_3, .... \}$ drawn i.i.d. from $Q(x)$, where both distributions are over some Polish space $\mathcal{X}$ with a Borel measure. We assume that $P^* \ll Q$ and in fact that the $\infty-$order Renyi divergence between $P^*$ and $Q$ is bounded: there exists $M$ s.t.
\begin{equation}
    \sup\limits_{x} \frac{dP^*}{dQ}(x) < M.
\end{equation}
We will refer to any such $M$ as a LR ceiling of the scheme. \\ \\
Then, the \emph{rejection sampling} scheme proceeds as follows for a particular choice of $M$:
\begin{enumerate}[i)]
    \item Draw $\{U_1, U_2, U_3, ...\}$ from $\mathrm{Unif}(0,1)$.
    \item Select the index $I = \min \left\{ i \in \N \,| \, U_i \leq \frac{1}{M}\frac{dP^*}{dQ}(x) \right\}$, and output $X_I$ from $\mathcal{C}$.
\end{enumerate}
Although this procedure is well-known, we show proof of its correctness below for the sake of completeness.
\begin{lemma} \label{eq:RejectionSamplingCorrectness}
For any measurable $S \subset \mathcal{X}$, we have
\begin{align}
    \Pr\left( X_I \in S, I = i \right) = \frac{P^*(S)}{M}\left( 1 - \frac{1}{M}\right)^{i-1}
\end{align}
\end{lemma}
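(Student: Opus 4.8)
The plan is to condition on the codebook $\mathcal{C} = \{X_1, X_2, \ldots\}$ and on the accept/reject decision at each index. The event $\{I = i\}$ decomposes as the intersection of \emph{rejection at every index $j < i$}---i.e., $U_j > \frac{1}{M}\frac{dP^*}{dQ}(X_j)$---with \emph{acceptance at index $i$}---i.e., $U_i \le \frac{1}{M}\frac{dP^*}{dQ}(X_i)$. Since the pairs $(X_j, U_j)$ are mutually independent across $j$, the probability of this intersection, further intersected with $\{X_i \in S\}$, factors into a product of $i-1$ rejection probabilities and one ``accept-and-land-in-$S$'' probability.

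For a single index $j$, conditioning on $X_j$ and using that $U_j \sim \mathrm{Unif}(0,1)$ is independent of $X_j$, the conditional rejection probability is $1 - \frac{1}{M}\frac{dP^*}{dQ}(X_j)$, which lies in $[0,1]$ precisely because $M$ is an LR ceiling. Taking the expectation over $X_j \sim Q$ and using $\int \frac{dP^*}{dQ}\, dQ = P^*(\mathcal{X}) = 1$ gives a rejection probability equal to $1 - \frac{1}{M}$, independent of $j$. The analogous computation at the accepted index, restricted to $\{X_i \in S\}$, yields $\E_{X_i \sim Q}\big[\mathbf{1}\{X_i \in S\}\,\tfrac{1}{M}\tfrac{dP^*}{dQ}(X_i)\big] = \tfrac{1}{M}\int_S \tfrac{dP^*}{dQ}\, dQ = \tfrac{P^*(S)}{M}$. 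Multiplying the $i-1$ rejection factors by this last factor gives the claimed identity.

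I do not anticipate a genuine obstacle; the only care needed is measure-theoretic bookkeeping. I would invoke the tower property (or Fubini) to justify integrating out the $U_j$'s conditionally on the codebook, and note that $x \mapsto \frac{dP^*}{dQ}(x)$ is $Q$-a.e.\ well-defined and bounded by $M$, so every integral is finite and the change of measure $\int_S \frac{dP^*}{dQ}\, dQ = P^*(S)$ is legitimate. It is also worth remarking that $I$ is an a.s.\ finite random variable, since each index is accepted independently with probability $1/M > 0$; summing the displayed identity over all $i \in \N$ with $S = \mathcal{X}$ returns $1$, which serves as a consistency check.
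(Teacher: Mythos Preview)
Your proposal is correct and takes essentially the same approach as the paper: both decompose $\{I=i\}$ into $i-1$ rejection events and one acceptance event, compute the acceptance-in-$S$ probability as $\int_S \tfrac{1}{M}\tfrac{dP^*}{dQ}\,dQ = \tfrac{P^*(S)}{M}$ and each rejection probability as $1-\tfrac{1}{M}$, and combine them. The only cosmetic difference is that the paper obtains the $(1-1/M)^{i-1}$ factor via a short induction on $\Pr(I>i)$, whereas you invoke independence of the $(X_j,U_j)$ pairs directly to factor the probability in one step.
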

\begin{proof}
    We have,
    \begin{align}
        \Pr\left( X_I \in S, \,I = i\right)&=  \Pr\left( X_I \in S, \,I = i \, | \, I \geq i\right) \Pr\left( I \geq i \right)\\
        &=  \Pr\left( X_i \in S, \,U_i \leq \frac{1}{M}\frac{dP^*}{dQ}(X_i) \right) \Pr\left( I \geq i \right)\\
        &=  \int\limits_{S} \frac{1}{M}\frac{dP^*}{dQ} dQ  \Pr\left( I \geq i \right)\\
        &=  \frac{P^*(S)}{M} \Pr\left( I \geq i \right). \label{eq:RejSample1}
    \end{align}
    Now, for all $i>0$, consider, 
    \begin{align}
        \Pr\left( I > i \right) &= \Pr\left( I \neq i | I > i-1 \right)\Pr(I > i-1)\\
        &= \Pr(I > i-1) \left( 1 - \Pr\left( I = i | I > i-1 \right) \right)\\
        &= \Pr(I > i-1) \left( 1 - \Pr\left( U_{i} \leq \frac{1}{M}\frac{dP^*}{dQ}(X_{i}) \right) \right)\\
        &= \Pr(I > i-1)  \left( 1 - \int\limits_{\mathcal{X}} \frac{1}{M}\frac{dP^*}{dQ} dQ \right)\\
        &= \Pr(I > i-1) \left( 1 - \frac{1}{M} \right),
    \end{align}
    We note the trivial base case $\Pr(I > 0) = 1$. Then, proceeding inductively, we obtain, for all $i>0$,
    \begin{align}
        \Pr\left( I > i \right) = \left( 1 - \frac{1}{M} \right)^{i}.
    \end{align}
    Substituting this back in (\ref{eq:RejSample1}), we finally obtain the required result.
\end{proof}
Using the above result and observing that $I < \infty \; \mathrm{a.s.}$, we can establish the correctness of the scheme. From the mean and memorylessness of the geometric distribution, we also have the following:
\begin{align}
    \E[I] &= M \; \mathrm{and},\\
    \E[ I | I > K ] &= \E[ I ] + K. \label{eq:GeometricMemorylessness}
\end{align}
% Next, we show that the expected value of the returned index is equal to $M$.
% \begin{lemma}  \label{lemma:RejectionSamplingExpectation}
% If $M = \sup_x \frac{P^*(x)}{Q(x)}$, then $\E[I] = M$.
% \end{lemma}
% \begin{proof}
%     We have,
%     \begin{align}
%         \E[I] &= \sum_{i=0}^{\infty} \Pr( I > i )\\
%         &= \sum_{i=0}^{\infty} \left( 1 - \frac{1}{M}\right)^i\\
%         &= \frac{1}{1 - \left( 1 - \frac{1}{M} \right)}\\
%         &= M.
%     \end{align}
% \end{proof}

% Finally, we show that the scheme is memoryless:

% \begin{lemma} \label{lemma:RejectionSamplingMemoryless}
%     Let $K \in \mathcal{N}^+$ be an arbitrary positive integer. Then, we have
%     \begin{align}
%         \E[ I | I > K ] = \E[ I ] + K.
%     \end{align}
% \end{lemma}
% \begin{proof}
%     We have,
%     \begin{align}
%         \E[ I | I > K ] &= \sum_{i=0}^{\infty} \Pr( I > i | I > K )\\
%         &= K + 1 + \sum_{i=K+1}^{\infty} \Pr( I > i | I > K )\\
%         &= K + 1 + \sum_{i=K+1}^{\infty} \frac{\Pr( I > i , I > K )}{\Pr(I>K)}\\
%         &= K + 1 + \sum_{i=K+1}^{\infty} \frac{\Pr( I > i )}{\Pr(I>K)}\\
%         &= K + 1 + \sum_{i=K+1}^{\infty} \left( 1 - \frac{1}{M}\right)^{i-K}\\
%         &= K + 1 + \frac{1 - \frac{1}{M}}{1 - \left( 1 - \frac{1}{M} \right)}\\
%         &= K + M\\
%         &= \E[ I ] + K. \label{eq:MemorylessRejSample}
%     \end{align}
%     (\ref{eq:MemorylessRejSample}) follows from Lemma \ref{lemma:RejectionSamplingExpectation}.
% \end{proof}

\subsection{Achievability Scheme} \label{section:QuantizedSchemeSingular}
\begin{enumerate}
            \item For some constant $
            \Delta>0$, consider the set of intervals of the real line given by:
            \begin{equation}
                \overline{I}_{\Delta} = \left\{ [ i\Delta, (i+1)\Delta  ) \, | \, i \in \N \right\}.
            \end{equation}
            Next, we define the quantizing function $Q_\Delta: \R \rightarrow \R$ that maps every interval $I_j = [ j\Delta, (j+1)\Delta )$ to its leftmost point: 
            \begin{equation}
                Q_\Delta(x) = j\Delta \; \text{for all } x \in I_j.
            \end{equation}
            We refer to the range of $Q_\Delta$ by $\mathcal{Q}_{\Delta}$.
            \item For $n>0$ and $\gamma \in \mathcal{Q}_{\Delta}$, define the set 
            \begin{align}
                A_{\gamma} = \Big\{ (x^n, y^n) \in \mathcal{X}^n \times \mathcal{Y}^n : \,  Q_\Delta \left( \log \left( \frac{dp_{Y^n|X^n}}{dp_{Y^n}}(y^n|x^n) \right) \right) = \gamma \Big\}.
            \end{align}
           Define the random variable $\Gamma$ taking values in $\mathcal{Q}_{\Delta}$  
           \begin{equation}
                  \Gamma = Q_\Delta \left( \log \left( \frac{dp_{Y^n|X^n}}{dp_{Y^n}}(Y^n|X^n) \right) \right).
           \end{equation}        
            \item (\emph{Auxiliary Scheme}): Next, we describe the first stage of our scheme---simulating the LLR generating channel $p_{\Gamma|X^n}$. \label{item:LLRCost}
                \begin{enumerate}[I)]
                    \item We begin by defining the second and third-level log-likelihood ratios required for this sub-problem: Let $\overline{\Gamma}_2 = Q_\Delta\left( \log \frac{1}{p_{\Gamma}(\Gamma)} \right)$ and $\overline{\Gamma}_1 = Q_\Delta\left( \log \frac{1}{p_{\Gamma|X^n}(\Gamma|X^n)} \right)$ and $\overline{\Gamma} = \overline{\Gamma}_2 - \overline{\Gamma}_1$. Further, let $\overline{\overline{\Gamma}} = Q_\Delta\left( \log \frac{1}{p_{\overline{\Gamma}_1, \overline{\Gamma}_2|X^n}(\overline{\Gamma}_1, \overline{\Gamma}_2|X^n)} \right)$.
                    
                    \item Generate realizations $(\overline{\gamma}_1, \overline{\gamma}_2, \overline{\overline{\gamma}}) \sim p_{\overline{\Gamma}_1, \overline{\Gamma}_2, \overline{\overline{\Gamma}}|X^n=x^n}$ and communicate them to the decoder losslessly using an optimal prefix-free code. Let $\overline{\gamma} = \overline{\gamma}_2 - \overline{\gamma}_1$.
                    \item Use the common randomness realisation $u$ to generate an infinitely long codebook $\mathcal{G}_u = \{ \gamma_1, \gamma_2, ... \}$ drawn i.i.d. from $p_{\Gamma}$.
                    \item Use rejection sampling (Section \ref{section:RejectionSampling}) with a LR ceiling of $\tau_{\mathrm{aux}} = \exp_2(\overline{\gamma} + 3\Delta + \overline{\overline{\gamma}})$ to generate an index $K(x^n, u)$ (abbr.~$K$) such that the indexed realization $\gamma(x^n, u) = \mathcal{G}_u(K)$ is distributed as $p_{\Gamma|X, \overline{\Gamma}_1, \overline{\Gamma}_2}(\cdot|x^n, \overline{\gamma}_1, \overline{\gamma}_2)$.
                    \item Send $K$ to the decoder using an entropy code for a geometric distribution with success probability $\frac{1}{\tau_{\mathrm{aux}}}$.  
                \end{enumerate}
                We will henceforth omit the arguments for $\gamma(x^n, u)$, using $\gamma$ instead for the sake of greater readability.
            \item 
            Next, we define a proposal distribution $W$ that differs for singular and non-singular channels:
            \begin{enumerate}[I)]
                \item If $p_{Y|X}$ is singular, we choose $W = p_{Y^n|\Gamma}(\cdot|\gamma)$.
                \item Otherwise, we choose $W = p_{Y^n}$.
            \end{enumerate}
            \item
            Using the common randomness realization $u$, construct an infinitely long i.i.d.\ codebook $\mathcal{C}_{\gamma, u} = \{ y_1^n, y_2^n.... \}$ from $W$.
            \item Consider a parameter $\tau$ that differs for singular and non-singular channels.
            \begin{enumerate}[I)]
            \item  If $p_{Y|X}$ is singular, let $\tau = \tau_{\mathrm{s}} = \exp( \gamma - \overline{\gamma} + 2\Delta )$
            \item Else, $\tau = \tau_{\mathrm{ns}} =\exp( \gamma + \overline{\gamma}_1 + 2\Delta )$
            \end{enumerate}
             \item (\emph{Primary Scheme}): Use rejection sampling with a LR ceiling of $\tau$ to generate an index $J(x^n, \gamma, u)$ (abbr.~$J$) such that the indexed realisation $y_J^n \in \mathcal{C}_{\gamma, u}$ is distributed as $p_{Y^n|X^n, \Gamma}(\cdot|x^n, \gamma)$.  \label{item:PrimaryScheme}
            In the proof of Lemma \ref{lemma:SchemeValidity}, we will see that $\tau_{\mathrm{aux}}, \tau_{\mathrm{s}}$ and $\tau_{\mathrm{ns}}$ are valid LR ceilings for their respective schemes.
            \item
            Send $J$ to the decoder using an entropy code for the geometric distribution with success probability $\frac{1}{\tau}$.
            We denote the length of the codeword in this stage of the scheme by $L(u, x^n, \gamma)$.
    \item The decoder outputs the selected codeword from the codebook.
    \end{enumerate}

\subsubsection{Validity} \label{section:SchemeValidity}

\begin{lemma} \label{lemma:SchemeValidity}
    The scheme $\Pi = (f_n, g_n, U_n)$ described in Section \ref{section:QuantizedSchemeSingular} simulates the channel $p_{Y^n|X^n}$ for
    any $n$.
\end{lemma}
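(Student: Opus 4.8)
The plan is to reduce the claim to two invocations of the rejection-sampling correctness result (Lemma~\ref{eq:RejectionSamplingCorrectness}) stitched together by a chaining argument, preceded by a check that $\tau_{\mathrm{aux}}$, $\tau_{\mathrm s}$, and $\tau_{\mathrm{ns}}$ are bona fide LR ceilings. I would establish, in order: (a) $(f_n,g_n)$ is a valid code tuple and $g_n(f_n(x^n,u),u)$ is exactly the reconstruction string chosen by the primary-stage rejection sampler; (b) the three ceilings are valid, so both rejection samplers terminate almost surely and return samples from their nominal targets; (c) the auxiliary stage delivers $\gamma\sim p_{\Gamma|X^n}(\cdot|x^n)$; (d) the primary stage then delivers $Y^n\sim p_{Y^n|X^n}(\cdot|x^n)$. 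Part (a) is bookkeeping: the codeword is the concatenation of a prefix-free code for $(\overline{\gamma}_1,\overline{\gamma}_2,\overline{\overline{\gamma}})$ with two geometric entropy codes for $K$ and $J$, hence its range is prefix-free; and from $u$ together with the losslessly conveyed triple the decoder recovers the parameters $\tau_{\mathrm{aux}}$ and $\tau$, regenerates the codebooks $\mathcal{G}_u$ and $\mathcal{C}_{\gamma,u}$, and reads off $\gamma=\mathcal{G}_u(K)$ and $y^n=\mathcal{C}_{\gamma,u}(J)$, which coincide with the encoder's selections. (Here $\Gamma$ takes values in the countable set $\mathcal{Q}_\Delta$, so all the conditional "densities" of $\Gamma$-like quantities below are genuine probability masses.)

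For part (b) the only tools are the quantizer inequality $Q_\Delta(z)\le z<Q_\Delta(z)+\Delta$ and the fact that $\Gamma$, $\overline{\Gamma}_2$, $\overline{\Gamma}_1$, $\overline{\overline{\Gamma}}$ are deterministic functions of $(X^n,Y^n)$, of $\Gamma$, of $(\Gamma,X^n)$, and of $(\overline{\Gamma}_1,\overline{\Gamma}_2,X^n)$, respectively. Applying Bayes' rule and this determinism, on the relevant conditional support the auxiliary target obeys $\frac{dp_{\Gamma|X^n,\overline{\Gamma}_1,\overline{\Gamma}_2}(\cdot|x^n,\overline{\gamma}_1,\overline{\gamma}_2)}{dp_\Gamma}(\gamma)=\frac{p_{\Gamma|X^n}(\gamma|x^n)}{p_\Gamma(\gamma)\,p_{\overline{\Gamma}_1,\overline{\Gamma}_2|X^n}(\overline{\gamma}_1,\overline{\gamma}_2|x^n)}$, and bounding each of the three factors with the quantizer inequality — using that $\gamma$ is consistent with $(\overline{\gamma}_1,\overline{\gamma}_2)$ and that $\overline{\overline{\gamma}}$ was computed from them — gives $\frac{dp_{\Gamma|X^n,\overline{\Gamma}_1,\overline{\Gamma}_2}(\cdot|x^n,\overline{\gamma}_1,\overline{\gamma}_2)}{dp_\Gamma}(\gamma)\le\exp_2(\overline{\gamma}+\overline{\overline{\gamma}}+2\Delta)<\tau_{\mathrm{aux}}$. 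The same template, applied to $dp_{Y^n|X^n,\Gamma}(\cdot|x^n,\gamma)/dW$, handles the primary stage: for singular channels $\Gamma$ is a function of $Y^n$ alone, so $\frac{dp_{Y^n|X^n,\Gamma}(\cdot|x^n,\gamma)}{dp_{Y^n|\Gamma}(\cdot|\gamma)}(y^n)=\frac{dp_{Y^n|X^n}}{dp_{Y^n}}(y^n|x^n)\cdot\frac{p_\Gamma(\gamma)}{p_{\Gamma|X^n}(\gamma|x^n)}<\tau_{\mathrm s}$, whereas for non-singular channels $\frac{dp_{Y^n|X^n,\Gamma}(\cdot|x^n,\gamma)}{dp_{Y^n}}(y^n)=\frac{dp_{Y^n|X^n}}{dp_{Y^n}}(y^n|x^n)\cdot\frac{1}{p_{\Gamma|X^n}(\gamma|x^n)}<\tau_{\mathrm{ns}}$. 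In each case the target is absolutely continuous with respect to the proposal because $p_{Y^n|X^n=x^n}\ll p_{Y^n}$ for $p_{X^n}$-almost every $x^n$, which holds under the standing assumption $I(X;Y)<\infty$; hence Lemma~\ref{eq:RejectionSamplingCorrectness} applies.

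Parts (c) and (d) are the chaining step. Fix $X^n=x^n$. In the auxiliary stage, $(\overline{\gamma}_1,\overline{\gamma}_2,\overline{\overline{\gamma}})$ is drawn from $p_{\overline{\Gamma}_1,\overline{\Gamma}_2,\overline{\overline{\Gamma}}|X^n}(\cdot|x^n)$ and then, by part (b), $\gamma$ is drawn from $p_{\Gamma|X^n,\overline{\Gamma}_1,\overline{\Gamma}_2}(\cdot|x^n,\overline{\gamma}_1,\overline{\gamma}_2)$; since $\overline{\overline{\Gamma}}$ is a deterministic function of $(X^n,\overline{\Gamma}_1,\overline{\Gamma}_2)$ it contributes nothing to the conditioning, so the tuple $(\gamma,\overline{\gamma}_1,\overline{\gamma}_2)$ has its true joint conditional law given $x^n$, whence $\gamma\sim p_{\Gamma|X^n}(\cdot|x^n)$. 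In the primary stage, conditionally on $(x^n,\gamma)$ — and independently of which valid ceiling $\tau$ is used, since the output law in Lemma~\ref{eq:RejectionSamplingCorrectness} is the target for \emph{every} valid $M$ — the reconstruction $y^n$ is drawn from $p_{Y^n|X^n,\Gamma}(\cdot|x^n,\gamma)$. Therefore $(\gamma,y^n)$, conditioned on $X^n=x^n$, has law $p_{\Gamma|X^n}(\cdot|x^n)\,p_{Y^n|X^n,\Gamma}(\cdot|x^n,\cdot)=p_{\Gamma,Y^n|X^n}(\cdot|x^n)$, whose $Y^n$-marginal is $p_{Y^n|X^n}(\cdot|x^n)$; combined with $X^n\sim p_{X^n}$, the pair $(X^n,Y^n)$ has law $p_{X^n}p_{Y^n|X^n}$, as required.

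The step I expect to be the main obstacle is part (b): in particular, verifying in the singular case that the proposal $p_{Y^n|\Gamma}(\cdot|\gamma)$ genuinely dominates $p_{Y^n|X^n,\Gamma}(\cdot|x^n,\gamma)$ — which is precisely where singularity is invoked, and is what lets the primary-stage codeword length scale with $\gamma-\overline{\gamma}$ rather than with $\gamma$ — and carefully tracking the several $\Delta$-sized quantization slacks so that each likelihood ratio provably falls strictly below its ceiling. The chaining in (c)–(d) is then routine, provided one is careful that the portions of the common randomness feeding the two codebooks and the two rejection samplers are mutually independent and independent of $X^n$.
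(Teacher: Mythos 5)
Your proposal is correct and follows essentially the same route as the paper: reduce validity to checking that $\tau_{\mathrm{aux}}$, $\tau_{\mathrm s}$, and $\tau_{\mathrm{ns}}$ are valid LR ceilings for their respective rejection samplers, derive the same likelihood-ratio identities via Bayes' rule and the determinism of $\Gamma$, $\overline{\Gamma}_1$, $\overline{\Gamma}_2$, $\overline{\overline{\Gamma}}$ (with singularity entering exactly where you say, to cancel the $p_{\Gamma|X^n,Y^n}/p_{\Gamma|Y^n}$ factor), and bound each factor using the $\Delta$-quantizer slack. You are in fact somewhat more complete than the paper's proof, which only carries out the ceiling computations and leaves implicit both the prefix-free/decoder-reconstruction bookkeeping of your part (a) and the two-stage chaining and common-randomness-independence argument of your parts (c)--(d); also note the paper's $\exp(\cdot)$ in the definitions of $\tau_{\mathrm s}, \tau_{\mathrm{ns}}$ is a typo for $\exp_2(\cdot)$, which you correctly use.
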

\begin{proof}
    The validity of the scheme follows from the validity of the primary and auxiliary schemes. To show this, it is sufficient to show that the LLR ceilings for both schemes are bounded. Let $M_{\mathrm{a}}$, $M_{\mathrm{p, singular}}$ and $M_{\mathrm{p, non-singular}}$ be the exponentiated order-$\infty$ Renyi divergences for the auxiliary scheme, primary scheme for singular channels, and the primary scheme for non-singular channels respectively. Then, we have,
    \begin{align}
        M_a &= \sup\limits_{\gamma: p_{\Gamma}(\gamma) > 0 } \frac{p_{\Gamma|X^n, \overline{\Gamma}_1, \overline{\Gamma}_2}(\gamma|x^n, \overline{\gamma}_1, \overline{\gamma}_2)}{ p_{\Gamma}(\gamma) }\\
        &= \sup\limits_{\gamma: p_{\Gamma}(\gamma) > 0  } \frac{p_{\Gamma, \overline{\Gamma}_1, \overline{\Gamma}_2|X^n}(\gamma, \overline{\gamma}_1, \overline{\gamma}_2 |x^n)}{ p_{\overline{\Gamma}_1, \overline{\Gamma}_2|X^n}(\overline{\gamma}_1, \overline{\gamma}_2|x^n) p_{\Gamma}(\gamma) } \\
        &= \sup\limits_{\gamma: p_{\Gamma}(\gamma) > 0  } \frac{p_{ \overline{\Gamma}_1, \overline{\Gamma}_2|X^n, \Gamma}(\overline{\gamma}_1, \overline{\gamma}_2|x^n, \gamma) p_{\Gamma|X^n}(\gamma|x^n)}{ p_{\overline{\Gamma}_1, \overline{\Gamma}_2|X^n}(\overline{\gamma}_1, \overline{\gamma}_2|x^n) p_{\Gamma}(\gamma) } \\
        &\le \exp_2(\overline{\gamma} + 3\Delta + \overline{\overline{\gamma}})\\
        &= \tau_{\mathrm{aux}},
    \end{align}
    Next, we proceed to the primary scheme for singular channels. 
    We will first recall some useful properties of the Radon-Nikodym derivative:
    \begin{enumerate}[i)]
        \item (\emph{Chain Rule}:) For any probability measures $P \ll Q \ll R$,
        \begin{equation}
            \frac{dP}{dR} = \frac{dP}{dQ} \frac{dQ}{dR}.
        \end{equation}
        (Refer to Exercise A.4.8. from Durrett \cite{durrett2019probability})
        \item For any joint measure $Q_{X, Y}$ and $P_{X, Y}$ on $\mathcal{X} \times \mathcal{Y}$, the following holds
        \begin{equation}
            \frac{dQ_{X, Y}}{dP_{X, Y}} = \frac{dQ_{Y | X}}{dP_{Y | X}} \cdot \frac{dQ_{X}}{dP_{X}},
        \end{equation}
        assuming that the respective conditional and marginal Radon-Nikodym derivatives exist. ( Refer to Lemma \ref{lemma:RNFactorization} in the Appendix).
    \end{enumerate}
    We are now in a position to calculate $M_{p, \mathrm{singular}}$:
    \begin{samepage}
        \begin{align}
            &M_{p, \mathrm{singular}}\\
            &= \sup\limits_{Q_{\Delta}\left( \log \frac{dp_{Y^n|X^n}}{dp_{Y^n}}(y^n|x^n)\right) = \gamma} \frac{dp_{Y^n|X^n, \Gamma}}{dp_{Y^n|\Gamma}}(y^n|x^n, \gamma)\\
            &= \sup\limits_{Q_{\Delta}\left( \log \frac{dp_{Y^n|X^n}}{dp_{Y^n}}(y^n|x^n)\right) = \gamma} \Bigg( \frac{dp_{\Gamma|X^n, Y^n}}{dp_{ \Gamma|Y^n}}(\gamma|x^n, y^n) \times \frac{dp_{Y^n|X^n}}{dp_{Y^n}}(y^n|x^n) \times \frac{dp_{\Gamma}}{ dp_{\Gamma|X^n}}(\gamma) \Bigg)\\
            &= \sup\limits_{Q_{\Delta}\left( \log \frac{dp_{Y^n|X^n}}{dp_{Y^n}}(y^n|x^n)\right) = \gamma}\Bigg( \frac{p_{\Gamma|X^n, Y^n}(\gamma|x^n, y^n )}{p_{ \Gamma|Y^n}(\gamma|y^n)} \times \frac{dp_{Y^n|X^n}}{dp_{Y^n}}(y^n|x^n) \times \frac{p_{\Gamma}(\gamma)}{ p_{\Gamma|X^n}(\gamma|x^n)} \Bigg) \label{eq:RNCounting}\\
            &= \sup\limits_{Q_{\Delta}\left( \log \frac{dp_{Y^n|X^n}}{dp_{Y^n}}(y^n|x^n)\right) = \gamma} \frac{dp_{Y^n|X^n}}{dp_{Y^n}}(y^n|x^n) \times \frac{p_{\Gamma}(\gamma)}{ p_{\Gamma|X^n}(\gamma|x^n)} \label{eq:SingularSimplify}\\
            &\leq \exp( \gamma - \overline{\gamma} + 2\Delta )\\
            &= \tau_{\mathrm{s}}.
        \end{align}
    \end{samepage}
    Here (\ref{eq:RNCounting}) follows from observing that $p_{\Gamma}, p_{\Gamma|X^n}, p_{\Gamma|X^n, Y^n}$ and $p_{\Gamma|Y^n}$ are all absolutely continuous w.r.t to the counting measure. Subsequently, (\ref{eq:SingularSimplify}) follows as $\Gamma$ is a deterministic function of $Y^n$ for singular channels, thus implying that $p_{\Gamma|X^n, Y^n}$ and $p_{\Gamma|Y^n}$ are delta functions.\\ \\
    Performing a similar calculation for non-singular channels, we obtain
\begin{align}
    &M_{p, \mathrm{non-singular}}\\
    &= \sup\limits_{Q_{\Delta}\left( \log \frac{dp_{Y^n|X^n}}{dp_{Y^n}}(y^n|x^n)\right) = \gamma} \frac{dp_{Y^n|X^n, \Gamma}}{dp_{Y^n}}(y^n|x^n, \gamma)\\
    &= \sup\limits_{Q_{\Delta}\left( \log \frac{dp_{Y^n|X^n}}{dp_{Y^n}}(y^n|x^n)\right) = \gamma} \frac{dp_{\Gamma|X^n, Y^n}}{dp_{ \Gamma|X^n}}(\gamma|x^n, y^n) \times \frac{dp_{Y^n|X^n}}{dp_{Y^n}}(y^n|x^n)\\
    &\le \exp_2( \gamma + 2\Delta + \overline{\gamma}_1 ) \label{eq:NonSingularRateThreshold}\\
    &= \tau_{\mathrm{ns}}.
\end{align}
Therefore, we can see that all the LLR ceilings in our schemes are bounded, thus establishing their correctness.
\end{proof}

\subsubsection{Rate Analysis}

\begin{lemma} \label{lemma:RateAll}
    Given a source distribution $p_X$ and channel $p_{Y^n|X^n}$, the proposed scheme $\Pi = (f_n, \phi_n, U)$ has rate
    \begin{align}
         R_{\mathrm{singular}} &\le I(X;Y) + \frac{\mathcal{E}(p_{XY}, n)}{n}
    \end{align}
    if $p_{Y^n|X^n}$ is singular or 
    \begin{align}
        R_{\mathrm{non-singular}} &\le I(X;Y) + \frac{H(\Gamma)}{n} + \frac{\mathcal{E}(p_{XY}, n)}{n}
    \end{align}
    if $p_{Y^n|X^n}$ is non-singular, where $\mathcal{E}(p_{XY}, n)$ is a collection of error terms given by
    \begin{align}
        \mathcal{E}(p_{XY}, n) &= 2H(\overline{\Gamma}_1, \overline{\Gamma}_2,\overline{\overline{\Gamma}}) + 7\Delta + 3 + 2\log(e).
    \end{align}
\end{lemma}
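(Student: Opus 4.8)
The plan is to bound the expected codeword length as a sum of the bits spent in the auxiliary scheme (transmitting $(\overline{\gamma}_1,\overline{\gamma}_2,\overline{\overline{\gamma}})$ losslessly plus the index $K$) and the bits spent in the primary scheme (transmitting the index $J$), and then to show that the dominant term telescopes into $I(X;Y)$ while everything else collapses into $\mathcal{E}(p_{XY},n)/n$. For the lossless transmissions, an optimal prefix-free (Shannon--Fano) code on the discrete variable $(\overline{\Gamma}_1,\overline{\Gamma}_2,\overline{\overline{\Gamma}})$ costs at most $H(\overline{\Gamma}_1,\overline{\Gamma}_2,\overline{\overline{\Gamma}})+1$ bits in expectation; since these variables appear both here and implicitly when bounding the geometric entropy codes, I expect the factor of $2$ in $\mathcal{E}$ to come from accounting for them twice. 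For the two geometric codes (for $K$ with success probability $1/\tau_{\mathrm{aux}}$, and for $J$ with success probability $1/\tau$), I would use the standard fact that an optimal prefix code for a geometric random variable of mean $\mu$ has expected length at most $\log\mu + \log e + 1$ (equivalently, bound $\E[\log_2 I]\le \log_2\E[I]+\log_2 e$ via Jensen, since $\E[I]=M$ from~(\ref{eq:GeometricMemorylessness}), then add the $+1$ overhead of the code); this is where the $3$ and the $2\log(e)$ in $\mathcal{E}$ originate.

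Next I would substitute the explicit LR ceilings. For the auxiliary index, $\E[\log_2 K \mid x^n,\overline{\gamma}_1,\overline{\gamma}_2]\le \log_2\tau_{\mathrm{aux}}+\log_2 e = \overline{\gamma}+3\Delta+\overline{\overline{\gamma}}+\log_2 e$. Taking expectations over everything and recalling $\overline{\Gamma}=\overline{\Gamma}_2-\overline{\Gamma}_1=Q_\Delta(\log\frac{1}{p_\Gamma(\Gamma)})-Q_\Delta(\log\frac{1}{p_{\Gamma|X^n}(\Gamma|X^n)})$, I would use the quantizer bound $|Q_\Delta(t)-t|\le\Delta$ to replace $\overline{\Gamma}$ by $\log\frac{p_{\Gamma|X^n}(\Gamma|X^n)}{p_\Gamma(\Gamma)}$ up to $O(\Delta)$ slack, whose expectation is exactly $I(X^n;\Gamma)$. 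Similarly, in the non-singular primary scheme, $\log_2\tau_{\mathrm{ns}}=\gamma+\overline{\gamma}_1+2\Delta$ contributes $\E[\Gamma]+\E[\overline{\Gamma}_1]$; the term $\E[\Gamma]\approx \E[\log_2\frac{dp_{Y^n|X^n}}{dp_{Y^n}}]=I(X^n;Y^n)=nI(X;Y)$ after dividing by $n$, and $\E[\overline{\Gamma}_1]\approx H(\Gamma|X^n)$, so combining the auxiliary $I(X^n;\Gamma)$ with this gives $I(X^n;\Gamma)+H(\Gamma|X^n)=H(\Gamma)$ — exactly the extra $H(\Gamma)/n$ term. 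For the singular case, $W=p_{Y^n|\Gamma}$ and $\log_2\tau_{\mathrm{s}}=\gamma-\overline{\gamma}+2\Delta$; here the $-\overline{\gamma}\approx -I(X^n;\Gamma)$ exactly cancels the auxiliary cost $+I(X^n;\Gamma)$, leaving only $\E[\Gamma]/n\approx I(X;Y)$ plus error terms, which is why the singular rate has no $H(\Gamma)/n$ overhead.

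Finally I would collect all the $O(\Delta)$ slacks from the $\le 7$ applications of $|Q_\Delta-\mathrm{id}|\le\Delta$ (three in the auxiliary LR ceiling's exponent, two in each primary ceiling, plus the ones incurred in relating $\overline{\Gamma}$, $\overline{\Gamma}_1$, $\Gamma$ to the true log-likelihoods), the constant $+1$ overheads from the lossless code and the geometric code, and the $\log e$ Jensen penalties, and verify they sum to at most $\mathcal{E}(p_{XY},n)=2H(\overline{\Gamma}_1,\overline{\Gamma}_2,\overline{\overline{\Gamma}})+7\Delta+3+2\log(e)$. The main obstacle I anticipate is the careful bookkeeping in the non-singular case: one must verify that $\E[\overline{\Gamma}_1]$ really does reconstitute $H(\Gamma|X^n)$ (rather than a conditional entropy of the quantized version, which differs by $O(\Delta)$) and that the identity $I(X^n;\Gamma)+H(\Gamma|X^n)=H(\Gamma)$ is applied to the genuine $\Gamma$, so that the $1/n$-normalized leftover is precisely $I(X;Y)+H(\Gamma)/n$ with all discrepancies safely inside $\mathcal{E}$; getting the constant in front of $H(\overline{\Gamma}_1,\overline{\Gamma}_2,\overline{\overline{\Gamma}})$ to come out as exactly $2$ rather than $3$ will require noticing that $\overline{\overline{\Gamma}}$ and the geometric overhead for $K$ can be charged against the same lossless transmission.
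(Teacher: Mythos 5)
Your proposal follows essentially the same decomposition and bookkeeping as the paper's proof: split the rate into the auxiliary cost $R_a$ (lossless transmission of $(\overline{\Gamma}_1,\overline{\Gamma}_2,\overline{\overline{\Gamma}})$ plus the geometric code for $K$) and the primary cost $R_p$ (geometric code for $J$), bound the geometric-code terms by $\log\tau + \log e + 1$, substitute the explicit LR ceilings, use the quantizer slack $|Q_\Delta(t)-t|\le\Delta$ to relate the $\overline{\gamma}$-variables to entropies of $\Gamma$, and observe the telescoping $-\overline{\Gamma}\approx -I(X^n;\Gamma)$ in the singular case versus the extra $+\overline{\Gamma}_1$ giving $H(\Gamma|X^n)+I(X^n;\Gamma)=H(\Gamma)$ in the non-singular case. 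Your identification of where the $7\Delta$, the $3$, the $2\log e$, and the factor of $2$ on $H(\overline{\Gamma}_1,\overline{\Gamma}_2,\overline{\overline{\Gamma}})$ come from all match (the second copy of $H(\overline{\Gamma}_1,\overline{\Gamma}_2,\overline{\overline{\Gamma}})$ arises from bounding $\E[\overline{\overline{\Gamma}}]\le H(\overline{\Gamma}_1,\overline{\Gamma}_2|X^n)+\Delta$ in the auxiliary ceiling, not literally from ``charging the geometric overhead against the same lossless transmission,'' but the effect is what you describe). One small slip: the parenthetical ``equivalently, bound $\E[\log_2 I]\le\log_2\E[I]+\log_2 e$ via Jensen'' is not how the $\log e$ arises—Jensen on the concave $\log$ gives $\E[\log I]\le\log\E[I]$ with no additive $\log e$, and in any case what is needed is the \emph{entropy} of the geometric index rather than $\E[\log I]$. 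The paper's Lemma~\ref{lemma:GeomRV} instead computes $H(\text{Geom}(1/\mu))$ directly and shows it is at most $\log\mu+\log e$; since the optimal prefix code adds at most one more bit, the $\log\tau+\log e+1$ bound you use is correct, just not for the reason stated in that aside.
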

\begin{proof}
    The total rate achieved by the scheme is the sum of two parts: The cost $R_a$ involved in simulating $p_{\Gamma|X^n}$---i.e. the auxiliary scheme (Item \ref{item:LLRCost}, Section \ref{section:QuantizedSchemeSingular}), and the cost $R_p$ involved in simulating $p_{Y^n|X^n, \Gamma}$---i.e. the primary scheme (Item \ref{item:PrimaryScheme}, Section \ref{section:QuantizedSchemeSingular}).\\ \\
    To compute these rates, we will make use of the LLR ceiling bounds on $M_{\mathrm{a}}$, $M_{\mathrm{p, singular}}$ and $M_{\mathrm{p, non-singular}}$ obtained in Lemma \ref{lemma:SchemeValidity}.
    We begin by computing $R_a$, which is common for singular and non-singular channels:
    \begin{align}
        nR_a &= H(K|\overline{\Gamma}, \overline{\overline{\Gamma}}) + H(\overline{\Gamma}_1, \overline{\Gamma}_2, \overline{\overline{\Gamma}}) + 2\\
        &\leq \E\left[ \log \tau_{\mathrm{aux}} + \log(e) \right] + H(\overline{\Gamma}_1, \overline{\Gamma}_2,\overline{\overline{\Gamma}}) + 2 \label{eq:GeomEntAux}\\
        &= \E\left[ \overline{\Gamma} + \overline{\overline{\Gamma}} + 3\Delta \right] + H(\overline{\Gamma}_1, \overline{\Gamma}_2,\overline{\overline{\Gamma}}) + 2 + \log(e)\\
        &\le H(\overline{\Gamma}_1, \overline{\Gamma}_2|X^n) + I( \Gamma; X^n ) +  H(\overline{\Gamma}_1, \overline{\Gamma}_2,\overline{\overline{\Gamma}}) + 4\Delta + 2 + \log(e)\\
        &\le I( \Gamma; X^n ) + 2H(\overline{\Gamma}_1, \overline{\Gamma}_2,\overline{\overline{\Gamma}}) + 4\Delta + 2 + \log(e)\label{eq:Ra}
    \end{align}
    In the above, (\ref{eq:GeomEntAux}) follows from Lemma \ref{lemma:GeomRV}.\\ \\
    Next, we perform a similar calculation for $R_p$, starting with singular channels: 
    \begin{align}
        n R_{p, \mathrm{singular}} &= H(J|\Gamma, \overline{\Gamma}) + 1\\
        &= \E\left[ \log \tau_{\mathrm{s}} + \log(e)\right] + 1\\
        &= \E\left[ \Gamma - \overline{\Gamma} + 2\Delta \right] + 1 + \log(e)\\
        &= I(X^n; Y^n) - I(X^n; \Gamma) + 3\Delta + 1 + \log(e). \label{eq:Rps}
    \end{align}
    Next, combining the expressions in (\ref{eq:Ra}) and (\ref{eq:Rps}) , we obtain
    \begin{align}
        n R_{\mathrm{singular}} &= nR_a + n R_{p, \mathrm{singular}}\\
        &\le I(X^n;Y^n) - I(X^n; \Gamma) + 3\Delta + 1 + \log(e)  \nonumber\\
        & \phantom{11} + I(\Gamma; X^n) + 2H(\overline{\Gamma}_1, \overline{\Gamma}_2,\overline{\overline{\Gamma}}) + 4\Delta + 2 + \log(e)\\
        &= I(X^n; Y^n) + 2H(\overline{\Gamma}_1, \overline{\Gamma}_2,\overline{\overline{\Gamma}}) + 7\Delta + 3 + 2\log(e).
    \end{align}
    Dividing throughout by $n$, we then obtain the required result
    \begin{align}
        R_{\mathrm{singular}} &\le I(X;Y) + \frac{\mathcal{E}(p_{XY}, n)}{n}.
    \end{align}

Next, we move on to the primary scheme for non-singular channels. Proceeding similarly, we obtain:
\begin{align}
        n R_{p, \mathrm{non-singular}} &= H(J|\Gamma, \overline{\Gamma}_1) + 1\\
        &= \E\left[ \log \tau_{\mathrm{ns}} + \log(e)\right] + 1\\
        &= \E\left[ \Gamma + \overline{\Gamma}_1 + 2\Delta \right] + 1 + \log(e)\\
        &= I(X^n; Y^n) + H(\Gamma|X^n) + 2\Delta + 1 + \log(e).
    \end{align}
Using this, we can calculate the combined rate for non-singular channels:
\begin{align}
    n R_{\mathrm{non-singular}} &= n R_{p, \mathrm{non-singular}} + nR_a \\
    &\le I(X^n; Y^n) + H(\Gamma|X^n) + 2\Delta + 1 + \log(e) \nonumber\\
    &\phantom{11} + I( \Gamma; X^n ) + 2H(\overline{\Gamma}_1, \overline{\Gamma}_2,\overline{\overline{\Gamma}}) + 4\Delta + 2 + \log(e)\\
    &= I(X^n; Y^n) + H(\Gamma) + 2H(\overline{\Gamma}_1, \overline{\Gamma}_2,\overline{\overline{\Gamma}}) + 6\Delta + 3 + 2\log(e).
\end{align}
Normalising by $n$, we obtain the required result
\begin{align}
    R_{\mathrm{non-singular}} &\le I(X;Y) + \frac{H(\Gamma)}{n} + \frac{\mathcal{E}(p_{XY}, n)}{n}.
\end{align}
\end{proof}
We finally conclude the proof of Theorem \ref{theorem:MainResultAchievability} by showing sub-logarithmic upper bounds for $H(\Gamma)$, $H(\overline{\Gamma}_1)$, $H(\overline{\Gamma}_2)$ and $H(\overline{\overline{\Gamma}})$, which as a consequence, allows us to conclude that $\mathcal{E}(p_{XY}, n) = o(\log n)$ and that $H(\Gamma) = \frac{\log n}{2} + o\left( \log n \right)$.

\begin{lemma} \label{lemma:EntropyGammaUB}
    Let $\Gamma$ be the random variable $$\Gamma = Q_{\Delta} \left( \log \frac{dp_{Y^n|X^n}(Y^n|X^n)}{dp_{Y^n}} \right).$$ Then, for all $n$, we have, $$H\left( \Gamma \right) \leq \frac{\log n}{2} + C,$$ for some constant $C \in \R$ that depends on $p_{XY}$ and $\Delta$ but not $n$.
\end{lemma}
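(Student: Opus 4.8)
The plan is to exploit the i.i.d./memoryless structure. Writing $\tilde\Gamma = \log\frac{dp_{Y^n|X^n}}{dp_{Y^n}}(Y^n\mid X^n)$, the tensorization of Radon--Nikodym derivatives (Lemma~\ref{lemma:RNFactorization}) gives $\tilde\Gamma = \sum_{k=1}^n Z_k$ with $Z_k = \log\frac{dp_{Y|X}}{dp_Y}(Y_k\mid X_k)$ i.i.d., $\E[Z_k] = I(X;Y)$, and variance $\sigma^2 := \Var(Z_k)$. Thus $\tilde\Gamma$ has mean $nI(X;Y)$ and variance $n\sigma^2$, i.e.\ it concentrates on a window of width $\Theta(\sqrt n)$; since a lattice-valued random variable whose mass sits on $\Theta(\sqrt n)$ cells has entropy $\tfrac12\log n + O(1)$, the quantized $\Gamma = Q_\Delta(\tilde\Gamma)$ should obey the claimed bound. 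I would make this precise through a dithering argument so as to invoke the Gaussian maximum-entropy inequality.

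Concretely: first, because $|\tilde\Gamma - \Gamma| < \Delta$ pointwise, Minkowski's inequality gives $\|\Gamma - nI(X;Y)\|_2 \le \|\tilde\Gamma - nI(X;Y)\|_2 + \|\Gamma-\tilde\Gamma\|_2 \le \sigma\sqrt n + \Delta$, hence $\Var(\Gamma) \le (\sigma\sqrt n + \Delta)^2 \le n(\sigma+\Delta)^2$ for all $n \ge 1$. Next I would introduce an independent dither $D\sim\mathrm{Unif}[0,\Delta)$ and set $\hat W = \Gamma + D$; since $\Gamma$ is supported on the lattice $\Delta\mathbb{Z}$, $\hat W$ has a piecewise-constant density with $h(\hat W) = H(\Gamma) + \log\Delta$, while $\Var(\hat W) = \Var(\Gamma) + \Delta^2/12$. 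Applying $h(\hat W) \le \tfrac12\log(2\pi e\,\Var(\hat W))$ then yields
\begin{equation}
H(\Gamma) \;\le\; \tfrac12\log\!\Big(2\pi e\, n\big((\sigma+\Delta)^2 + \tfrac{\Delta^2}{12}\big)\Big) - \log\Delta \;=\; \tfrac12\log n + \tfrac12\log\frac{2\pi e\big((\sigma+\Delta)^2 + \Delta^2/12\big)}{\Delta^2},
\end{equation}
which is the assertion with $C := \tfrac12\log\frac{2\pi e((\sigma+\Delta)^2 + \Delta^2/12)}{\Delta^2}$, a constant depending only on $\Delta$ and on $p_{XY}$ through $\sigma$. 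As a byproduct, since $\Var(\hat W) < \infty$, this also shows $H(\Gamma) < \infty$.

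The main obstacle is the finiteness of $\sigma^2 = \Var\!\big(\log\frac{dp_{Y|X}}{dp_Y}(Y\mid X)\big)$: the whole estimate hinges on it, and indeed if only $I(X;Y) < \infty$ is assumed then $\tilde\Gamma$ need not live on a $\Theta(\sqrt n)$ window and the bound can pick up an extra $\tfrac12\log\log n$, so this is the hypothesis one must pin down (it is automatic in the full-support finite-alphabet setting of Theorem~\ref{theorem:MainResultConverse}, where $\log\frac{dp_{Y|X}}{dp_Y}(Y\mid X)$ is bounded). The remaining ingredients---the tensorization $\frac{dp_{Y^n|X^n}}{dp_{Y^n}}(y^n\mid x^n)=\prod_k \frac{dp_{Y|X}}{dp_Y}(y_k\mid x_k)$, the value $\Delta^2/12$ of a uniform variance, and the Gaussian maximum-entropy bound---are all standard. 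An equivalent route, avoiding the dither, is to bound $H(\Gamma)$ via Gibbs' inequality against a two-sided-geometric (discrete-Laplace) or discrete-Gaussian ``prior'' on $\Delta\mathbb{Z}$ centred near $Q_\Delta(nI(X;Y))$ with scale $\Theta(\sqrt n)$; the dithering version seems cleanest and produces the explicit $C$ above.
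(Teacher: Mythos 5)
Your proposal is correct and follows essentially the same route as the paper: bound the variance of the quantized LLR using the $n\sigma^2$ variance of $\tilde\Gamma$ plus the $O(\Delta)$ quantization error, then invoke the Gaussian maximum-entropy bound (your explicit dither $\hat W = \Gamma + D$ and $h(\hat W) = H(\Gamma)+\log\Delta$ argument is exactly what is packaged inside Theorem~9.7.1 of Cover--Thomas, which the paper cites). The only cosmetic difference is that you control $\Var(\Gamma)$ via Minkowski's inequality where the paper expands the square and applies Cauchy--Schwarz; your observation that the argument tacitly requires $\sigma^2 = \Var\bigl(\log\tfrac{dp_{Y|X}}{dp_Y}(Y\mid X)\bigr)<\infty$, not just $I(X;Y)<\infty$, is a fair point but applies equally to the paper's proof.
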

\begin{proof}
    Considered the non-quantized LLR
    \begin{align}
        \tilde{\Gamma} &= \log \frac{dp_{Y^n|X^n}(Y^n|X^n)}{dp_Y^n}\\
        &= \sum\limits_{i=1}^n \log \frac{dp_{Y|X}(Y_i|X_i)}{dp_Y}.
    \end{align}
    This is the sum of independent random variables. Therefore, we obtain:
    \begin{align}
        \Var(\tilde{\Gamma}) &= \Var\left( \sum\limits_{i=1}^n \log \frac{dp_{Y|X}(Y_i|X_i)}{dp_Y} \right)\\
        &= n\Var\left(\log \frac{dp_{Y|X}(Y|X)}{dp_Y}\right)\\
        &= n\sigma^2,
    \end{align}
    where we let $\sigma^2$ denote the variance of $\log \frac{dp_{Y|X}(Y|X)}{dp_Y}$. Based on this, we can derive a bound for the variance of the quantized log-likelihood $\Var[\Gamma]$. Let $I$ be a random variable that takes the value $i$ if $\Gamma \in \left( i\Delta, (i+1)\Delta \right)$. Then, we have, 
    \begin{align}
        \Var(I) &= \E\left[ ( I - \E[I] )^2 \right]\\
        &\leq \E\left[ \left( I - \frac{\E[\tilde{\Gamma}]}{\Delta} \right)^2 \right] \label{eq:MinMSE}\\
        &= \E\left[ \left( I - \frac{\tilde{\Gamma}}{\Delta} + \frac{\tilde{\Gamma}}{\Delta} - \frac{\E[\tilde{\Gamma}]}{\Delta} \right)^2 \right]\\
        &= \E\left[ \left( I - \frac{\tilde{\Gamma}}{\Delta} \right)^2\right] + \frac{1}{\Delta^2}\E\left[ ( \tilde{\Gamma} - \E[\tilde{\Gamma}] )^2\right] + 2\E\left[ \left( I - \frac{\tilde{\Gamma}}{\Delta} \right)\left( \frac{\tilde{\Gamma}}{\Delta} - \frac{\E[\tilde{\Gamma}]}{\Delta} \right)\right]\\
        &\le 1 + \frac{\Var( \tilde{\Gamma} )}{\Delta^2} + 2\E\left[ \left( I - \frac{\tilde{\Gamma}}{\Delta} \right)\left( \frac{\tilde{\Gamma}}{\Delta} - \frac{\E[\tilde{\Gamma}]}{\Delta} \right)\right] \label{eq:AbsBoundGamma}\\
        &\le 1 + \frac{\Var( \tilde{\Gamma} )}{\Delta^2} + 2 \sqrt{ \frac{\Var(\tilde{\Gamma})}{\Delta^2} } \label{eq:CauchSchwarzGamma}\\
        &= \left( 1 + \frac{\sqrt{\Var(\tilde{\Gamma}})}{\Delta}\right)^2\\
        &= \left( 1 + \frac{\sqrt{n}\sigma}{\Delta} \right)^2.
    \end{align}
    Here, (\ref{eq:MinMSE}) follows from the fact that the conditional mean minimizes the mean squared error,  (\ref{eq:AbsBoundGamma}) follows from observing that $|\tilde{\Gamma} - I\Delta| \le \Delta$ almost surely, and (\ref{eq:CauchSchwarzGamma}) follows from the Cauchy-Schwarz inequality.
    
    We then use the max-entropy upper bound for $\hat{\Gamma}$ (Theorem 9.7.1 in \cite{ThomasCover}) to obtain the required bound:
    \begin{align}
        H(\Gamma) &= H(I)\\
        &\leq \frac{1}{2} \log \left( 2\pi e \Var(I ) + \frac{2 \pi e}{12} \right)\\
        &\leq \frac{1}{2} \log \left( 2\pi e \left( 1 + \frac{\sqrt{n}\sigma}{\Delta} \right)^2 + \frac{2 \pi e}{12} \right)\\
        &= \frac{1}{2} \log \left( 2\pi e \left( 1 + \frac{n\sigma^2}{\Delta^2} + \frac{2\sigma\sqrt{n}}{\Delta} \right) + \frac{2 \pi e}{12} \right)\\
        &\leq \frac{1}{2} \log n + C\,, \, \text{for} \, \text{some} \, \text{constant} \, C >0.
    \end{align}
\end{proof}

\begin{lemma} \label{lemma:etaEntUB}
    Given an i.i.d. source random variable $X^n \sim p_{X^n}$ and i.i.d. channel $p_{Y^n|X^n}$, let $\Gamma = Q_\Delta\left( \log \frac{dp_{Y^n|X^n}( Y^n|X^n )}{dp_{Y^n}} \right)$.
    Consider the random variables defined by
    \begin{align}
        \overline{\Gamma}_1 &=  Q_\Delta\left( \log \frac{1}{p_{\Gamma|X^n}(\Gamma|X^n)} \right),\\
        \overline{\Gamma}_2 &= Q_\Delta\left( \log \frac{1}{p_{\Gamma}(\Gamma)} \right), \text{ and}\\
        \overline{\overline{\Gamma}} &= Q_\Delta\left( \log \frac{1}{p_{\overline{\Gamma}_1, \overline{\Gamma}_2|X^n}\left( \overline{\Gamma}_1, \overline{\Gamma}_2|X^n \right)} \right).
    \end{align}
    Then, we have the following entropy upper bounds
    \begin{align}
        H \left( \overline{\Gamma}_1 \right) &\leq \log \left( H\left( \Gamma \right) + \frac{\Delta}{2} \right) + \log e,\\
        H \left( \overline{\Gamma}_2 \right) &\leq \log \left( H\left(\Gamma \right) + \frac{\Delta}{2} \right) + \log e, \text{and},\\
        H \left( \overline{\overline{\Gamma}} \right) &\leq \log \left( H\left(\overline{\Gamma}_1 \right) + H\left( \overline{\Gamma}_2 \right) + \frac{\Delta}{2} \right) + \log e.
    \end{align}
\end{lemma}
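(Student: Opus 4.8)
The three inequalities share a common structure, so the plan is to set up one template and instantiate it three times. The key observation is that each of $\overline{\Gamma}_1,\overline{\Gamma}_2,\overline{\overline{\Gamma}}$ is of the form $Q_\Delta(\log(1/p))$ for a probability $p$, hence is a nonnegative random variable supported on the lattice $\{0,\Delta,2\Delta,\dots\}$, and each is dominated by the corresponding unquantized information density since $Q_\Delta(t)\le t$. The template is therefore: (a) bound the mean of such a variable by the relevant entropy via the identity $\E[\log(1/p_W(W))]=H(W)$; (b) bound the entropy of a nonnegative $\Delta$-lattice variable in terms of its mean by a maximum-entropy argument.

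For step (b) I would dither: let $V\sim\mathrm{Unif}[0,\Delta)$ be independent of the lattice variable $Z$, so that $Z+V$ has a piecewise-constant density and $h(Z+V)=H(Z)+\log\Delta$. Since $Z+V\ge 0$ and the exponential law maximizes differential entropy among nonnegative variables of a given mean, $h(Z+V)\le\log e+\log\E[Z+V]=\log e+\log(\E[Z]+\Delta/2)$, whence
\[
H(Z)\le \log(\E[Z]+\Delta/2)+\log e-\log\Delta\le \log(\E[Z]+\Delta/2)+\log e,
\]
the final step using $\Delta\ge 1$ (taking $\Delta\ge 1$ costs nothing, as $\Delta$ is a fixed constant that only enters the eventual bound through the $o(\log n)$ term $\mathcal{E}(p_{XY},n)$; alternatively one simply keeps the harmless $-\log\Delta$). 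Equivalently one could pass to the integer variable $Z/\Delta$ and cite the geometric maximum-entropy bound already invoked elsewhere (cf.\ Lemma~\ref{lemma:GeomRV}).

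It then remains to estimate the three means. For $\overline{\Gamma}_2=Q_\Delta(\log(1/p_\Gamma(\Gamma)))$ we have $\E[\overline{\Gamma}_2]\le\E[\log(1/p_\Gamma(\Gamma))]=H(\Gamma)$; inserting this into the template gives the first inequality. For $\overline{\Gamma}_1=Q_\Delta(\log(1/p_{\Gamma|X^n}(\Gamma|X^n)))$ we have $\E[\overline{\Gamma}_1]\le H(\Gamma|X^n)\le H(\Gamma)$ since conditioning reduces entropy, giving the second. For $\overline{\overline{\Gamma}}$ we have $\E[\overline{\overline{\Gamma}}]\le H(\overline{\Gamma}_1,\overline{\Gamma}_2|X^n)\le H(\overline{\Gamma}_1,\overline{\Gamma}_2)\le H(\overline{\Gamma}_1)+H(\overline{\Gamma}_2)$, using conditioning and subadditivity of entropy, and the template then yields the third.

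The computation is routine; the only place needing a little care is step (b) — cleanly handling the $\Delta$-lattice support (the dither trick) and matching the additive constants exactly as stated — but I expect no substantive obstacle.
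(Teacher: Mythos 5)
Your proof is correct and follows essentially the same route as the paper: dither the $\Delta$-lattice variable with independent uniform noise, bound the resulting differential entropy using the exponential maximum-entropy inequality, and then bound the means via $\E[\log(1/p_W(W))]=H(W)$ together with conditioning and subadditivity. The one place you are actually more careful than the paper is the dithering identity: you correctly record $h(Z+V)=H(Z)+\log\Delta$, whereas the paper writes $H(\overline{\Gamma}_2)=h(\tilde{\overline{\Gamma}}_2)$, which as stated holds only for $\Delta=1$; your observation that the extra $-\log\Delta$ is either harmless (for $\Delta\ge 1$) or can simply be retained in the lower-order error term makes the accounting airtight and does not change the asymptotic conclusion.
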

\begin{proof}
    We will prove the bound for $\overline{\Gamma}_2$. Consider the continuous real-valued random variable
    \begin{equation}
        \tilde{\overline{\Gamma}}_2 = \overline{\Gamma}_2 + U_{\Delta},
    \end{equation}
    where $U_{\Delta}$ is uniform over $(0, \Delta)$. \\ \\
    Then, using the proof outlined in Theorem 9.7.1 of Thomas and Cover \cite{ThomasCover}, we can show that $H\left(\overline{\Gamma}_2 \right) = h\left( \tilde{\overline{\Gamma}}_2 \right)$. Next, we upper bound this differential entropy using the maximum entropy property of the exponential distribution (See Example 11.2.5, Thomas and Cover \cite{ThomasCover}), observing that $\overline{\Gamma}_2$ is strictly non-negative:
    \begin{align}
        H\left(\overline{\Gamma}_2 \right) &= h\left( \tilde{\overline{\Gamma}}_2 \right)\\
       &\le \log e + \log(E[\overline{\Gamma}_2 + U_\Delta])\\
       &= \log e + \log\left(E[\overline{\Gamma}_2] + \frac{\Delta}{2}\right) \\
       &\le \log e + \log\left(H(\Gamma) + \frac{\Delta}{2}\right).
    \end{align}
    The bounds for $H \left( \overline{\Gamma}_1 \right)$ and $H \left( \overline{\overline{\Gamma}} \right)$ can be obtained using similar calculations.
\end{proof}
Finally, combining Lemmas \ref{lemma:RateAll}, \ref{lemma:EntropyGammaUB},  and \ref{lemma:etaEntUB}, we have the proof of Theorem \ref{theorem:MainResultAchievability}.

\section{Converse For DMCs}
For the converse, we will use ideas from rate-distortion theory. For a non-singular channel $p_{Y|X}$ with discrete input and output alphabets $\mathcal{X}$ and $\mathcal{Y}$, consider the following distortion metric
\begin{equation}
    d_L(x, y) = -\log p_{Y|X}(y|x) - \kappa(x)  \; \text{for all } x \in \mathcal{X}, y \in \mathcal{Y}, \text{where}
\end{equation}
$\kappa(x) = -\max\limits_{y \in \mathcal{Y}} \log p_{Y|X}(y|x)$.
For sequences of blocklength $n$, we define the distortion to be the average of the component-wise distortions:
\begin{equation}
    d_L(x^n, y^n) = \frac{1}{n}\sum\limits_{i=1}^n d_L(x_i, y_i).
\end{equation}
We will also find it useful to define
\begin{equation}
    \delta_{\text{max}} = \max\limits_{ \substack{ x \in \mathcal{X} \\ y \in \mathcal{Y} } } d_L( x, y )
\end{equation}
to be the largest possible single-letter distortion value. \\ \\
Next, we will review some definitions and results from type theory.

\begin{definition}
    Given any discrete alphabet $\mathcal{A}$ and a blocklength $n>0$, the type of a sequence $z^n \in \mathcal{A}^n$ is defined as the tuple
    \begin{equation}
        T(z^n) = \left( t(z^n, a_1), t(z^n, a_2)..., t(z^n, a_{|\mathcal{A}|}) \right),\, \text{where}
    \end{equation}
    $t(z^n, a_k) = \frac{1}{n}|\{ j:z_j = a_k \}|$.
    Let $\mathcal{T}_n(\mathcal{A})$ be the set of all types of blocklength $n$ on the set $\mathcal{A}$. This is a subset of $\mathcal{P}( \mathcal{A} )$, which is the set of all distributions on $\mathcal{A}$. Then, for any $t\in \mathcal{T}_n(\mathcal{A})$, we define the type class corresponding to $t$ to be:
    \begin{equation}
        T_{\mathcal{A}, n}(t) = \{x^n \in \mathcal{A}^n \,|\, T(x^n) = t \}.
    \end{equation}
    For $t\in \mathcal{T}_n(\mathcal{A})$, we will use the notation $H_{\mathcal{A}, n}(t)$ to denote the entropy of a random variable with the distribution $t$.\\
\end{definition}

\begin{definition}\emph{(Typical Sets):}
    For a source distribution $p$ on $\mathcal{A}^n$ and all $\epsilon \in (0, 1)$, the \emph{$\epsilon$-typical set} is defined as
    \begin{equation}
        T_{\mathcal{A}, n}^{\epsilon}(p) = \{ z^n \in \mathcal{A}^n \,|\, \| T(z^n) - p \|_2 \le \epsilon \}.
    \end{equation}
\end{definition}
The probability of the non-typical sequences decreases exponentially fast for any $\epsilon$:
\begin{lemma} \label{Lemma:TypicalSetSize}
    For all $\epsilon \in ( 0, 1 )$, we have
    \begin{equation}
        \lim\limits_{n \to \infty} \frac{1}{n} \log \left( 1 - p\left( T_{\mathcal{A}, n}^{\epsilon}(p) \right) \right) = -\beta( \epsilon, p ) 
    \end{equation}
    where $\beta(\epsilon, p)$ is positive and does not depend on $n$.
\end{lemma}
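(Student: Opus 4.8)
The plan is to recognize this as a standard finite-alphabet large-deviations (Sanov-type) estimate and prove matching exponential bounds on the atypical probability $1 - p\big( T^{\epsilon}_{\mathcal{A},n}(p) \big)$ by the method of types. Since the source is i.i.d.\ (as throughout the paper), I write $p$ both for the product law on $\mathcal{A}^n$ and for its single-letter marginal, and I set
\[
  \beta(\epsilon,p) \;=\; \inf\big\{\, D(q\|p) \;:\; q \in \mathcal{P}(\mathcal{A}),\ \|q - p\|_2 > \epsilon \,\big\},
\]
where $D(q\|p) = \sum_{a \in \mathcal{A}} q(a)\log\tfrac{q(a)}{p(a)}$ is the relative entropy in bits and $\inf\emptyset := +\infty$; using the \emph{strict} inequality here is what will make the upper and lower exponents coincide. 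Because the complement of $T^{\epsilon}_{\mathcal{A},n}(p)$ is exactly $\{z^n : \|T(z^n) - p\|_2 > \epsilon\}$, decomposing it into type classes gives $1 - p\big(T^{\epsilon}_{\mathcal{A},n}(p)\big) = \sum_{t \in \mathcal{T}_n(\mathcal{A}):\, \|t-p\|_2 > \epsilon} p\big(T_{\mathcal{A},n}(t)\big)$. That $\beta(\epsilon,p) > 0$ follows by restricting to $q \ll p$ (else $D=\infty$): the map $q \mapsto D(q\|p)$ is continuous on the compact set $\{q \in \mathcal{P}(\mathrm{supp}\,p): \|q-p\|_2 \ge \epsilon\}$, which excludes $p$, so its minimum there is attained and positive, and $\beta(\epsilon,p)$ is at least this minimum. (If no $q$ satisfies $\|q-p\|_2 > \epsilon$, then $\beta = +\infty$, no $n$-type is atypical, $1 - p(T^{\epsilon}_{\mathcal{A},n}(p)) = 0$ for all $n$, and the claimed limit is $-\infty = -\beta$; assume otherwise below.)

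For the upper bound I invoke the two standard method-of-types facts (see \cite{ThomasCover}): $p\big(T_{\mathcal{A},n}(t)\big) \le \exp_2(-nD(t\|p))$ for every $t \in \mathcal{T}_n(\mathcal{A})$, and $|\mathcal{T}_n(\mathcal{A})| \le (n+1)^{|\mathcal{A}|}$. Every $t$ in the sum above has $\|t-p\|_2 > \epsilon$, hence $D(t\|p) \ge \beta(\epsilon,p)$, so $1 - p\big(T^{\epsilon}_{\mathcal{A},n}(p)\big) \le (n+1)^{|\mathcal{A}|}\exp_2\!\big(-n\beta(\epsilon,p)\big)$. Taking $\tfrac1n\log$ and $n \to \infty$ gives $\limsup_n \tfrac1n\log\big(1 - p(T^{\epsilon}_{\mathcal{A},n}(p))\big) \le -\beta(\epsilon,p)$.

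For the lower bound — the substantive part — I need, for \emph{every} blocklength $n$, a bona fide $n$-type that is strictly atypical and has near-optimal exponent. Fix $\eta > 0$ and choose $q \in \mathcal{P}(\mathcal{A})$ with $\|q-p\|_2 > \epsilon$ and $D(q\|p) \le \beta(\epsilon,p) + \eta$; then $q \ll p$, and write $\|q-p\|_2 = \epsilon + \delta$ with $\delta > 0$. For each $n$, round $q$ coordinatewise to an $n$-type $\hat t_n \in \mathcal{T}_n(\mathcal{A})$ supported on $\mathrm{supp}\,p$, with $\|\hat t_n - q\|_2 \le c_{\mathcal{A}}/n$; the margin $\delta$ ensures that for all large $n$ one still has $\|\hat t_n - p\|_2 \ge \epsilon + \delta - c_{\mathcal{A}}/n > \epsilon$, i.e.\ $\hat t_n$ is atypical, while continuity of $D(\cdot\|p)$ on $\mathcal{P}(\mathrm{supp}\,p)$ gives $D(\hat t_n\|p) \to D(q\|p)$. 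Retaining only the term $t = \hat t_n$ in the type decomposition and using the matching lower bound $p\big(T_{\mathcal{A},n}(\hat t_n)\big) \ge (n+1)^{-|\mathcal{A}|}\exp_2(-nD(\hat t_n\|p))$ yields
\[
  \liminf_n \tfrac1n\log\big(1 - p(T^{\epsilon}_{\mathcal{A},n}(p))\big)
  \;\ge\; \liminf_n\Big({-}\tfrac{|\mathcal{A}|\log(n+1)}{n} - D(\hat t_n\|p)\Big)
  \;=\; -D(q\|p) \;\ge\; -\beta(\epsilon,p) - \eta .
\]
Letting $\eta \downarrow 0$ gives $\liminf_n \tfrac1n\log\big(1 - p(T^{\epsilon}_{\mathcal{A},n}(p))\big) \ge -\beta(\epsilon,p)$, which together with the upper bound establishes the lemma, with $\beta(\epsilon,p) > 0$ independent of $n$ (and finite whenever some distribution at $\ell_2$-distance $> \epsilon$ from $p$ is absolutely continuous with respect to $p$, in particular whenever $p$ has full support, as in the converse theorem).

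The main obstacle is precisely this last construction: unlike a subsequence argument along multiples of a fixed denominator, it must produce an admissible $n$-type for all $n$, and the rounding has to do three things simultaneously — stay strictly outside the $\epsilon$-ball (handled by building the margin $\delta$ into the target $q$ before rounding), remain absolutely continuous with respect to $p$ so the type-class probability lower bound is non-vacuous (handled by placing the rounding remainder on a coordinate of $\mathrm{supp}\,p$), and perturb the exponent by only $o(1)$ (handled by continuity of $D(\cdot\|p)$). The only conceptual point beyond this bookkeeping is defining $\beta(\epsilon,p)$ through the open constraint $\|q-p\|_2 > \epsilon$ from the outset; with the closed constraint the upper and lower exponents would differ in degenerate cases (e.g.\ when the cheapest atypical distribution sits at a vertex of the simplex), and no closure argument would repair the mismatch.
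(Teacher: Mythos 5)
The paper states Lemma~\ref{Lemma:TypicalSetSize} without proof, treating it as a standard Sanov-type estimate, so there is no internal argument to compare against. Your proof is a correct and self-contained method-of-types argument, and you handle the two points that actually require care: (i) defining $\beta(\epsilon,p)$ with the \emph{strict} constraint $\|q-p\|_2 > \epsilon$ so it matches the complement of $T^{\epsilon}_{\mathcal{A},n}(p)$ and the upper and lower exponents coincide, with the vacuous case routed to $\beta = +\infty$; and (ii) in the lower bound, rounding $q$ to a bona fide $n$-type $\hat t_n$ for \emph{every} $n$ in a way that simultaneously keeps $\hat t_n$ strictly atypical (via the built-in margin $\delta$), supported on $\mathrm{supp}\,p$ so the type-class probability lower bound is non-vacuous, and divergence-close to $q$ by continuity of $D(\cdot\|p)$ on $\mathcal{P}(\mathrm{supp}\,p)$. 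The positivity argument via compactness of $\{q \in \mathcal{P}(\mathrm{supp}\,p) : \|q - p\|_2 \ge \epsilon\}$ is also sound. I see no gaps.
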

We will now state some definitions and results from \cite{zhang1997redundancy} that we will use in our proof:
\begin{definition}
    For any $y^n \in \mathcal{Y}^n$ and $d>0$, we define the $d$-proximal set
    \begin{equation}
        B_{X}(y^n, d) = \{ x^n \in \mathcal{X}^n \,|\, d_L(x^n, y^n) \le d \}.
    \end{equation}
    We also define the $d$-proximal type for a specific type class $t$:
    \begin{equation}
        B_{X}(y^n, t, d) = B_{X}(y^n, d) \cap T_{\mathcal{X},n}(t).
    \end{equation}
    Further, we will denote the size of $B_X(y^n, t, d)$ by:
    \begin{equation}
        F_{X,n}(r, t, d) = |B_{X}(y^n, t, d)|,
    \end{equation}
    where $r = T(y^n)$.
\end{definition}
We then proceed to introduce further notation that becomes useful in characterizing $F_{X,n}(r, t, d)$.
\begin{definition}
    Given $t\in \mathcal{P}(\mathcal{X})$, $r\in \mathcal{P}(\mathcal{Y})$ and $d>0$, we define the feasible set w.r.t $(r, t, d)$ to be
    \begin{align}
        \mathcal{S}(r, t, d) = \Big\{ &s \in \mathcal{P}( \mathcal{X} \times \mathcal{Y} ) \, | \, \sum\limits_{x \in \mathcal{X}}s(x,\cdot) = r(\cdot), \sum\limits_{y \in \mathcal{Y}}s(\cdot,y) = t(\cdot) \text{ and } \nonumber \\
        &\sum\limits_{x \in \mathcal{X}}\sum\limits_{y \in \mathcal{Y}}s(x,y)d_L(x, y) \le d \Big\}.
    \end{align}
    This can be empty for certain values of $d$. Therefore, let 
    \begin{align}
        d_{\text{min}}(r, t) = \min \{ d \, | \, |\mathcal{S}(r, t, d)| > 0  \}
    \end{align}
    be the smallest value of $d$ for which the feasible set is non-empty. The entropy of the distributions in the feasible set is bounded by
    \begin{align}
        H_u(r, t, d) = \sup\limits_{s \in \mathcal{S}(r, t, d)} H(s).
    \end{align}
    We also define an upper threshold for $d$ so that $ \mathcal{S}(r, t, d)$ does not include all joint distributions on $\mathcal{X} \times \mathcal{Y}$ with the requisite marginal distributions:
    \begin{align}
        d_{\text{max}}(r, t) = \sup \{ d \, | \, H_u(r,t,d) < H_{\mathcal{X},n}(t) + H_{\mathcal{Y},n}(r) \}.
    \end{align}
\end{definition}
\begin{definition}
    A tuple $(r_0, t_0, d_0)$ where $r_0 \in \mathcal{P}(\mathcal{Y})$ , $t_0 \in \mathcal{P}(\mathcal{X})$, and $d_0 \in (d_{\text{min}}(r_0, t_0), d_{\text{max}}(r_0, t_0))$ is considered an \emph{anchor} if $r_0(k)>0 \; \text{for all } k \le |\mathcal{Y}| $ and $t_0(j)>0 \; \text{for all } j \le |\mathcal{X}| $. Then, for any $\sigma' > 0$, we define the $\sigma'$-neighbourhood of an anchor $(r_0, t_0, d_0)$ to be:
    \begin{align}
        \mathcal{N}_{\sigma'}(r_0, t_0, d_0) = \{ (r,t,d) \, | \, \| r - r_0 \|_2 \le \sigma', \| t - t_0 \|_2 \le \sigma', |d - d_0| \le \sigma' \}.
    \end{align}
\end{definition}
We are now in a position to state the results that will be used in the proof of the converse.

\begin{lemma} \label{Lemma:RDTypeResults}
    Given any anchor $(r_0, t_0, d_0)$, there exists $n_0 \in \N$ and $\sigma > 0$ s.t. $\text{for all } n > n_0$ and all $(r,t,d) \in \mathcal{N}_{\sigma}(r_0, t_0, d_0)$, we have
    \begin{enumerate}[i)]
        \item \label{item:TypeSizeBound}
            $\begin{aligned}
                \log F_{X,n}(r, t, d) \le nH_u(r, t, d) - nH_{\mathcal{Y},n}(r) - \frac{|\mathcal{X}|}{2} \log n + c_1 \log \log n,
            \end{aligned}$
            where $c_1$ is a positive constant that depends only on $(r_0, t_0, d_0)$.
        \item \label{item:rho} For any $y^n \in T_{\mathcal{Y},n}(r)$, consider the quantity
            \begin{align}
                \rho_n(r, t, d) = \frac{1}{F_{X,n}(r, t, d)} \sum\limits_{x^n \in B_{X}(y^n, t, d)} d_L(x^n, y^n).
            \end{align}
            We then have
            \begin{align}
                \rho_n(r, t, d) = d - \frac{\eta_{\rho}^*(n, r, t, d)}{n},
            \end{align}
            where the remainder term $\eta_{\rho}(n, r, t, d)$ satisfies
            \begin{equation}
                \lim\limits_{n \to \infty} \sup\limits_{ (r,t,d) \in \mathcal{N}_\sigma( r_0, t_0, d_0 ) } \frac{|\eta_{\rho}^*( n, r, t, d )|}{\log n} = 0.
            \end{equation}
            For convenience, we will define
            \begin{equation}
                \eta_{\rho}^*(n) = \sup\limits_{ (r,t,d) \in \mathcal{N}_\sigma( r_0, t_0, d_0 ) } \eta_{\rho}^*( n, r, t, d ).
            \end{equation}
    \end{enumerate}
\end{lemma}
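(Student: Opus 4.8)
\medskip
\noindent\textbf{Proof plan.}
Both items are refined (second--order) method--of--types estimates, and the plan is to follow the argument developed for the rate--distortion redundancy problem in \cite{zhang1997redundancy}, specialized to the distortion $d_L$, which is well defined and bounded, $0 \le d_L \le \delta_{\text{max}}$, on the finite product alphabet $\mathcal{X}\times\mathcal{Y}$. The central quantity I would work with is, for a fixed $y^n\in T_{\mathcal{Y},n}(r)$ and a joint $n$--type $s$ on $\mathcal{X}\times\mathcal{Y}$ with $\mathcal{Y}$--marginal $r$ and $\mathcal{X}$--marginal $t$, the number $N_n(s)$ of $x^n\in T_{\mathcal{X},n}(t)$ whose joint type with $y^n$ equals $s$: this is $\prod_{y\in\mathcal{Y}}\binom{nr(y)}{(ns(x,y))_{x\in\mathcal{X}}}$, and Stirling's formula evaluates it to $\log N_n(s) = n\big(H(s)-H_{\mathcal{Y},n}(r)\big) - \tfrac{(|\mathcal{X}|-1)|\mathcal{Y}|}{2}\log n + O(1)$, uniformly as long as the conditionals $s(\cdot\mid y)$ stay in a fixed compact subset of the open simplex --- which the anchor hypothesis on $(r_0,t_0)$ secures for the types that dominate. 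I would then write $F_{X,n}(r,t,d)=\sum_{s} N_n(s)$ with the sum over $n$--types $s\in\mathcal{S}(r,t,d)$.

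For the first bound, the next step is to locate the entropy--maximizer $s^\star(r,t,d)$ of $H$ over $\mathcal{S}(r,t,d)$: by strict concavity it is unique, and its Gibbs form shows that, on $\mathcal{N}_\sigma(r_0,t_0,d_0)$ with $\sigma$ small (using continuity of $d_{\text{min}},d_{\text{max}}$ in $(r,t)$ and the anchor condition $d_0\in(d_{\text{min}}(r_0,t_0),d_{\text{max}}(r_0,t_0))$), the point $s^\star$ lies in the relative interior of the transportation polytope with the distortion constraint \emph{active}, and the Hessian of $H$ is uniformly negative--definite on the tangent space of that active face. A Laplace/lattice--sum estimate around $s^\star$ then gives $\sum_s N_n(s)$ as $N_n(s^\star)$ times a factor $\Theta\big(n^{((|\mathcal{X}|-1)(|\mathcal{Y}|-1)-1)/2}\big)$ coming from the discretized Gaussian over the active face (whose dimension is $(|\mathcal{X}|-1)(|\mathcal{Y}|-1)-1$), the inward direction contributing only an $O(1)$ geometric factor; combining this with the $n^{-(|\mathcal{X}|-1)|\mathcal{Y}|/2}$ from the Stirling corrections collapses the exponent of $\log n$ to $-\tfrac{|\mathcal{X}|}{2}$, which is the asserted bound, with the $c_1\log\log n$ room absorbing the $O(1/n)$ gap between $s^\star$ and the nearest admissible $n$--type and the residual non--uniformity over $\mathcal{N}_\sigma$. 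Only the upper bound is claimed, so one can afford to be generous here; I would nonetheless also record the matching lower bound (keep just the $s^\star$--term), since the second item needs it.

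For the second bound I would use that $0\le d-d_L(x^n,y^n)\le d$ on $B_X(y^n,t,d)$ to write
\[
  d-\rho_n(r,t,d) = \frac{1}{F_{X,n}(r,t,d)}\sum_{x^n\in B_X(y^n,t,d)}\big(d-d_L(x^n,y^n)\big) \le \int_0^{d}\frac{F_{X,n}(r,t,d-u)}{F_{X,n}(r,t,d)}\,du .
\]
Since $d'\mapsto H_u(r,t,d')$ is concave and, by the anchor condition, has right derivative (the Lagrange multiplier $\lambda$) bounded below by some $c>0$ on $\mathcal{N}_\sigma$, one has $H_u(r,t,d-u)-H_u(r,t,d)\le -cu$, so the first item together with its lower--bound counterpart yields $F_{X,n}(r,t,d-u)/F_{X,n}(r,t,d)\le 2^{-cnu+o(\log n)}$ uniformly; integrating gives $d-\rho_n(r,t,d) = o(\log n)/n$ uniformly over $\mathcal{N}_\sigma$, i.e.\ the claimed estimate on $\eta_\rho^*$. (One in fact gets $O(1/n)$, the mean of the one--sided geometric of rate $\sim n\lambda$.)

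The main obstacle will be the uniformity bookkeeping: proving simultaneously, for every $(r,t,d)\in\mathcal{N}_\sigma$ and every large $n$, that $s^\star$ stays interior with the distortion constraint active, that the restricted Hessian is uniformly negative--definite, and that the Laplace remainder plus the lattice discretization error stay within $O(\log\log n)$. This is precisely the technical core of \cite{zhang1997redundancy}; the remaining work is to check that $d_L$ and the anchor hypotheses meet the conditions under which those estimates hold, after which everything above is routine.
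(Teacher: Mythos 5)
The paper's own ``proof'' of this lemma is a one-line citation to Lemmas~3 and~5 of Zhang and Yang \cite{zhang1997redundancy}, so you are essentially reconstructing proofs the paper treats as external. Your sketch follows the same method-of-types/Laplace-estimate route that Zhang uses, and the arithmetic is right: the $-\tfrac{(|\mathcal{X}|-1)|\mathcal{Y}|}{2}\log n$ Stirling correction from $\sum_y \binom{nr(y)}{(ns(x,y))_x}$, combined with the $+\tfrac{(|\mathcal{X}|-1)(|\mathcal{Y}|-1)-1}{2}\log n$ Laplace volume over the active face, does balance to the stated $-\tfrac{|\mathcal{X}|}{2}\log n$, and the anchor hypothesis is exactly what keeps the dominant types away from the simplex boundary and the constraint active so that the Hessian is uniformly negative-definite.

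One concrete loose end in your derivation of item~(ii): if you naively feed the item~(i) bound (with its two-sided $c_1\log\log n$ slack) into $\int_0^d F_{X,n}(r,t,d-u)/F_{X,n}(r,t,d)\,du$, you get a $(\log n)^{2c_1}$ prefactor and hence $n\bigl(d-\rho_n(r,t,d)\bigr) = O\bigl((\log n)^{2c_1}\bigr)$, which is \emph{not} $o(\log n)$ unless $c_1<\tfrac12$. Your parenthetical remark that one ``in fact gets $O(1/n)$'' is the correct conclusion, but it does not follow from item~(i) alone; it requires estimating the distribution of $d_L(Z^n,y^n)$ under $Z^n\sim\mathrm{Unif}(B_X(y^n,t,d))$ directly via the same Laplace analysis (i.e., showing the marginal over the distortion value is within a uniformly bounded factor of a one-sided geometric of rate $\Theta(n)$, so that the $\log\log n$ slack in the ratio of $F$'s cancels rather than compounding). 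That is what Zhang's Lemma~5 actually does, and you would need to invoke that sharper estimate rather than the plug-in you wrote.
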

\begin{proof}
    We refer to Lemmas 3 and 5 from \cite{zhang1997redundancy} for the proof.
\end{proof}

\begin{lemma} \label{Lemma:MaxEntropyConverse}
    Given an anchor point $( r_0, t_0, d_0 )$ and $n_0 \in \N, \sigma>0$ that satisfy Lemma \ref{Lemma:RDTypeResults}, consider the following maximum entropy problem for all $(r, t, \Delta - \kappa_t) \in \mathcal{N}_{\tilde{\sigma}}( r_0, t_0, d_0 )$ and $y^n \in T_{\mathcal{Y},n}(r)$, where $\tilde{\sigma} < \sigma$:
    \begin{samepage}
    \begin{align}
        \alpha(y^n, t, \Delta) &= \max H(Z^n)\\
        &\phantom{=1} p_{Z^n} \in \mathcal{P}(\mathcal{X}^n) \mathrm{\,s.t.\,}: \nonumber\\
        &\phantom{=1} p_{Z^n} = \mathrm{Unif}(S), \mathrm{for\,some\,} S \subset T_{\mathcal{X},n}(t), \mathrm{\, and,} \nonumber\\
        &\phantom{=1} \frac{1}{n} \E\left[ - \log p_{Y^n|X^n}(y^n|Z^n) \right] \le \Delta \nonumber,
    \end{align}
    where 
    \begin{equation}
        \kappa_t = \frac{1}{n}\sum\limits_{i=1}^n \kappa(Z_i), \; \text{where } Z^n \in T_{\mathcal{X},n}(t). \label{eq:kapp_t_defn}
    \end{equation}
    \end{samepage}
    Then, for all sufficiently small $\sigma$ and all sufficiently large $n$, the following holds
    \begin{align}
        \alpha(y^n, t, \Delta, n) \le n\Delta + nH_{\mathcal{X}, n}(t) - nH_{\mathcal{Y}, n}(r) - \frac{|\mathcal{X}|}{2}\log n + \eta_{\rho}^*(n) + c_1 \log \log n + 2\delta_{\text{max}} + 1.
    \end{align}
    Here, $\eta_{\rho}^*(n)$ is defined as in Lemma \ref{Lemma:RDTypeResults}.
\end{lemma}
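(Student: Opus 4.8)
The plan is to identify $\alpha(y^n,t,\Delta)$ with the logarithm of the size of the largest admissible set, reduce the log-likelihood constraint to a distortion-ball constraint, and then invoke Lemma~\ref{Lemma:RDTypeResults}. Concretely: any feasible $p_{Z^n}$ is uniform on some $S\subseteq T_{\mathcal X,n}(t)$, so $H(Z^n)=\log|S|$ and $\alpha(y^n,t,\Delta)$ is the logarithm of the largest admissible $|S|$. For $z^n\in T_{\mathcal X,n}(t)$ and the fixed $y^n$ one has $-\log p_{Y^n|X^n}(y^n|z^n)=n\,d_L(z^n,y^n)+\sum_{i}\kappa(z_i)=n\,d_L(z^n,y^n)+n\kappa_t$, the second term being constant on the type class; hence the constraint is exactly $\frac1{|S|}\sum_{z^n\in S}d_L(z^n,y^n)\le d$ with $d:=\Delta-\kappa_t$, and by hypothesis $(r,t,d)\in\mathcal N_{\tilde\sigma}(r_0,t_0,d_0)$ where $r=T(y^n)$.

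Two substantive ingredients then finish the proof. First, a max-entropy bound: for any joint law $s$ on $\mathcal X\times\mathcal Y$ with $\mathcal X$-marginal $t$ and $\sum_{x,y}s(x,y)d_L(x,y)\le d'$, I would write $H(s)=H_{\mathcal X,n}(t)+H_s(Y\mid X)$ and apply Gibbs' inequality letter-by-letter against the everywhere-positive channel $p_{Y|X}(\cdot\mid x)$ to get $H_s(Y\mid X=x)\le\sum_y s(y\mid x)\bigl(d_L(x,y)+\kappa(x)\bigr)$; averaging over $x$ yields $H_u(r,t,d')\le H_{\mathcal X,n}(t)+d'+\kappa_t$. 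Second, a ball-comparison bound: let $d^\star$ be the smallest distortion level with $F_{X,n}(r,t,d^\star)\ge|S|$ and set $d':=\max(d,d^\star)$, so $\log|S|\le\log F_{X,n}(r,t,d')$. If $d^\star>d$, then for any $d''\in(d,d^\star)$ one has $F_{X,n}(r,t,d'')<|S|$, so the $|S|$ sequences nearest $y^n$ (which minimize average distortion among size-$|S|$ subsets) contain all of $B_X(y^n,t,d'')$ while the remaining ones sit at distortion $>d''$; combined with $\rho_n(r,t,d'')=d''-\eta_\rho^*(n,r,t,d'')/n$ and $\eta_\rho^*(n,r,t,d'')\ge0$ from Lemma~\ref{Lemma:RDTypeResults}(ii), their average distortion is at least $d''-\eta_\rho^*(n)/n$, forcing $d''\le d+\eta_\rho^*(n)/n$; letting $d''\uparrow d^\star$ gives $d^\star\le d+\eta_\rho^*(n)/n$. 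Thus in all cases $d\le d'\le d+\eta_\rho^*(n)/n$, and since $\tilde\sigma<\sigma$ and $\eta_\rho^*(n)/n\to0$ the tuples $(r,t,d')$ and $(r,t,d'')$ lie in $\mathcal N_{\sigma}(r_0,t_0,d_0)$ once $n$ is large, so Lemma~\ref{Lemma:RDTypeResults} is applicable to them.

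Putting the pieces together, Lemma~\ref{Lemma:RDTypeResults}(i) at $d'$ followed by the max-entropy bound gives
\begin{align*}
\alpha(y^n,t,\Delta)&=\log|S|\le\log F_{X,n}(r,t,d')\le nH_u(r,t,d')-nH_{\mathcal Y,n}(r)-\tfrac{|\mathcal X|}{2}\log n+c_1\log\log n\\
&\le n\bigl(H_{\mathcal X,n}(t)+d'+\kappa_t\bigr)-nH_{\mathcal Y,n}(r)-\tfrac{|\mathcal X|}{2}\log n+c_1\log\log n\\
&\le n\Delta+nH_{\mathcal X,n}(t)-nH_{\mathcal Y,n}(r)-\tfrac{|\mathcal X|}{2}\log n+\eta_\rho^*(n)+c_1\log\log n,
\end{align*}
using $d'+\kappa_t\le\Delta+\eta_\rho^*(n)/n$; this is actually a hair stronger than what is claimed, and the additive $2\delta_{\max}+1$ is just comfortable slack — convenient for absorbing the degenerate regime where $|S|$ is sub-exponential (there the bound holds trivially) and any rounding of distortion levels.

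I expect the ball-comparison step to be the main obstacle: one must show that a set of prescribed cardinality with small average distortion cannot be much larger than a genuine distortion ball of comparable radius. A crude Markov split — partitioning $S$ at a threshold slightly above $d$ — costs an extra $\log n$ that cannot be tolerated here, so the argument really hinges on the sharp asymptotics $\rho_n(r,t,d)=d-\eta_\rho^*(n)/n$ of Lemma~\ref{Lemma:RDTypeResults}(ii). A secondary nuisance is keeping $d'$ and the auxiliary levels $d''$ inside the neighbourhood $\mathcal N_\sigma$ on which Lemma~\ref{Lemma:RDTypeResults} holds, which forces the bookkeeping with $\tilde\sigma<\sigma$ and "all sufficiently large $n$", together with isolating the small-$|S|$ case.
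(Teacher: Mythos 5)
Your proposal is correct and takes essentially the same route as the paper: reduce the constrained max-entropy problem to the size of a distortion ball around $y^n$, use Lemma~\ref{Lemma:RDTypeResults}(ii) to pin the ball's radius at $\Delta-\kappa_t+O(\eta_\rho^*(n)/n)$, then apply Lemma~\ref{Lemma:RDTypeResults}(i) together with the same Gibbs-inequality bound $H_u(r,t,d)\le H_{\mathcal X,n}(t)+d+\kappa_t$. The only real difference is bookkeeping: the paper sandwiches the extremizer between balls $B_X(y^n,t,d_n)\subset S^*\subset B_X(y^n,t,d_n+2\delta_{\text{max}}/n)$ and shows the radius $\Delta-\kappa_t+(\eta_\rho^*(n)+1)/n$ is infeasible, whereas you compare $|S|$ directly to the smallest ball with at least $|S|$ elements, which is why your additive constant is $2\delta_{\text{max}}+1$ smaller.
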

\begin{proof}
    This problem can be characterized entirely by the choice of the optimal subset $S$:
    \begin{samepage}
    \begin{align}
         \alpha(y^n, t, \Delta) &=\max \log |S|\\
        &\phantom{=1}  S \subset T_{\mathcal{X}, n}(t) \mathrm{\,s.t.\,}: \nonumber\\
        &\phantom{=1} J(y^n, S) \le \Delta \nonumber,
    \end{align}
    \end{samepage}
    where
    \begin{align}
        J(y^n, S) = \frac{1}{n} \E\left[ - \log p_{Y^n|X^n}(y^n | Z^n) \right] , \mathrm{\,where\,} Z^n \sim \mathrm{\,Unif\,}(S).
    \end{align}
    Next, we introduce the offset $\kappa(x^n)$ so that the constraint can be expressed in terms of the log-likelihood distance $d_L$:
    \begin{align}
        J(y^n, S) &= \frac{1}{n} \E\left[ - \log p_{Y^n|X^n}(y^n | Z^n) \right]\\
        &= \frac{1}{n} \E\left[ - \log p_{Y^n|X^n}(y^n | Z^n) - \sum\limits_{i=1}^n \kappa(Z_i) + \sum\limits_{i=1}^n \kappa(Z_i) \right]\\
        &= \E[ d_L( Z^n, y^n ) ] + \kappa_t.
    \end{align}
    We can then characterize the structure of the maximizing subset using a symmetry-based argument---consider two sequences $z_1, z_2 \in T_{\mathcal{X},n}(t)$ that are at different log-likelihood distances from $y^n$---i.e. $d_L(y^n, z_1^n) > d_L(y^n, z_2^n)$. Then, for any set $\hat{S}$ containing $z_1^n$ but not containing $z_2^n$, we can construct $$\tilde{S} = \{z_2^n\} \cup \left( \hat{S} \setminus \{ z_1^n \} \right).$$ Then, we have $|\hat{S}| = |\tilde{S}|$, but $J(\hat{S}) > J(\tilde{S})$. This shows that the objective always increases when a point far from $y^n$ is replaced with a closer point. Therefore, the maximizing subset $S^*$ can be bounded above by proximal sets---$B_X(y^n, t, d_n) \subset S^* \subset B_X(y^n, t, d_n + \frac{2\delta_{\text{max}}}{n})$ for some $d_n>0$.\\ \\
    As a result, the maximum entropy problem can be reduced to the problem of finding the optimal $d_n$:
    \begin{align}
         \alpha(y^n, t, \Delta) &\le \max \log F_{X, n}\left(r, t, d_n + \frac{2\delta_{\text{max}}}{n} \right)\\
        &\phantom{=1}  d_n \mathrm{\,s.t.\,}: \nonumber\\
        &\phantom{=1} J(y^n, B_{X}(y^n, t, d_n)) \le \Delta. \nonumber\\
        &=\max \log F_{X,n}\left(r, t, d_n + \frac{2\delta_{\text{max}}}{n}\right)\\
        &\phantom{=1}  d_n \mathrm{\,s.t.\,}: \nonumber\\
        &\phantom{=1} \frac{1}{F_{X,n}\left(r, t, d_n\right)} \sum\limits_{x^n \in B_X(y^n, t, d_n) } d_L(x^n, y^n) + \kappa_t \le \Delta. \nonumber\\
        &=\max \log F_{X,n}\left(r, t, d_n + \frac{2\delta_{\text{max}}}{n}\right)\\
        &\phantom{=1}  d_n \mathrm{\,s.t.\,}: \nonumber\\
        &\phantom{=1} \rho_n(r, t, d_n) + \kappa_t \le \Delta. \nonumber
    \end{align}
    Consider $d_n^* = \Delta - \kappa_t + \frac{\eta_{\rho}^*(n) + 1}{n}$, where $\eta_{\rho}^*(\cdot)$ is defined in Lemma \ref{Lemma:RDTypeResults}. Then, there exists some $n_1 \in \N$ s.t. for all $n>n_1$, $|d_n^* - d_0| \le \sigma$. Therefore, we can apply Lemma \ref{Lemma:RDTypeResults}.\ref{item:rho} to obtain 
    \begin{align}
        \rho_n(r, t, d_n^*) &= \rho_n\left(r, t, \Delta - \kappa_t + \frac{\eta_{\rho}^*(n) + 1}{n} \right)\\
        &\ge \Delta - \kappa_t + \frac{\eta_{\rho}^*(n) + 1}{n} - \frac{\eta_{\rho}^*(n)}{n}\\
        &= \Delta - \kappa_t + \frac{1}{n},
    \end{align}
    for all $n> \max(n_0, n_1)$. This shows that $d_n^*$ is infeasible.\\
    Then, we can use the fact that both the objective and the constraint are monotonic in $d_n$ to obtain
    \begin{align}
        \alpha(y^n, t, \Delta) &\le \log F_{X, n}\left(r, t, d_n^* + \frac{2\delta_{\text{max}}}{n}\right)\\
        &= nH_u\left(r, t, \Delta - \kappa_t + \frac{\eta_{\rho}^*(n) + 1}{n} + \frac{2\delta_{\text{max}}}{n}\right) - nH_{\mathcal{Y},n}(r) - \frac{|\mathcal{X}|}{2}\log n + c_1 \log \log n. \label{eq:RateHu}
    \end{align}
    Now, for any tuple $(r, t, d)$, consider
    \begin{align}
        &H_u(r, t, d) \\
        &= \sup\limits_{s \in \mathcal{S}(r, t, d)} H(s)\\
        &= \sup\limits_{s \in \mathcal{S}(r, t, d)} \sum\limits_{(x, y) \in \mathcal{X} \times \mathcal{Y}} -s(x,y) \log s(x, y)\\
        &= \sup\limits_{s \in \mathcal{S}(r, t, d)} \sum\limits_{(x, y) \in \mathcal{X} \times \mathcal{Y}} -s(x,y) \log s(y|x) + \sum\limits_{x \in \mathcal{X}} -s(x) \log s(x)\\
        &= \sup\limits_{s \in \mathcal{S}(r, t, d)} \sum\limits_{(x, y) \in \mathcal{X} \times \mathcal{Y}} -s(x,y) \log \frac{s(y|x)}{p_{Y|X}(y|x)} \\
        &- \sum\limits_{(x, y) \in \mathcal{X} \times \mathcal{Y}} s(x,y) \log p_{Y|X}(y|x) + \sum\limits_{x \in \mathcal{X}} -s(x) \log s(x)\\
        &= \sup\limits_{s \in \mathcal{S}(r, t, d)} d + \kappa_t + H_{\mathcal{X}, n}(t) - D\left(s(\cdot|X)\,||\,p_{Y|X}(\cdot|X)|s(X)\right)\\
        &\le d + \kappa_t + H_{\mathcal{X}, n}(t).
    \end{align}
    We can use the above to bound the quantity in (\ref{eq:RateHu}) to obtain the required result:
    \begin{align}
        \alpha(y^n, t, \Delta) &\le n\Delta + nH_{\mathcal{X}, n}(t) - nH_{\mathcal{Y}, n}(r) - \frac{|\mathcal{X}|}{2}\log n \nonumber\\
        & + \eta_{\rho}^*(n) + c_1 \log \log n + 2\delta_{\text{max}} + 1.
    \end{align}
\end{proof}

We then proceed to provide the proof of Theorem \ref{theorem:MainResultConverse}:
\begin{proof}
    Let $R$ be the rate of any scheme $(f,g,U)$ that performs exact channel synthesis. Then, we have
    \begin{align}
    nR &\ge H(f(X^n, U))\\
        &\ge H(f(X^n, U)|U)\\
        &\ge I( X^n; f(X^n, U)|U )\\
        &\ge I( X^n; Y^n|U )\\
        &= H(X^n) - H(X^n|Y^n, U)\\
        &\ge nH(X) - \left( H(X^n|Y^n, U, T(X^n)) + H(T(X^n)) \right) \label{eq:ConverseConnectingEq}.
        % &= nH(X) - \E_{\tilde{Y^n}, \tilde{U}}\left[ H(X^n|Y^n = \tilde{Y^n}, U = \tilde{U}) \right]
    \end{align}
    Now, consider
    \begin{align}
        H( T( X^n ) ) &= H( T(X^n)_1, T(X^n)_2,...,T(X^n)_{ |\mathcal{X}| } )\\
        &\le \sum\limits_{j=1}^{ |\mathcal{X}| - 1 } H( T(X^n)_j )\\
        &\le \frac{|\mathcal{X}| - 1}{2}\log n + c_{b1}, \label{eq:BinomEntropyUpperBound}
    \end{align}
    where $c_{b1}$ is a constant that depends only on $p_X$. Here, (\ref{eq:BinomEntropyUpperBound}) follows from deriving an upper bound on the entropy of a binomial random variable, which can be done in a manner similar to Lemma \ref{lemma:EntropyGammaUB}.\\ \\
    Substituting this in (\ref{eq:ConverseConnectingEq}), we obtain
    \begin{align}
        nR \ge nH(X) - \frac{|\mathcal{X}| - 1}{2}\log n - H(X^n|Y^n, U, T(X^n)) - c_{b1}. \label{eq:LBPrelStep}
    \end{align}
    Next, we need an upper-bound on the conditional entropy $H(X^n|Y^n, U, T(X^n))$.\\ \\
    The types of $X^n$ and $Y^n$ are close to their marginal distributions with high probability. This knowledge can be incorporated to make $H(X^n|Y^n, U, T(X^n))$ more tractable. To this end, for any $\sigma>0$, consider the indicator random variable
    \begin{equation}
        K_{\sigma} = 1\left\{ T(X^n) \in T_{\mathcal{X},n}^{\sigma}(p_X), T(Y^n) \in T_{\mathcal{Y},n}^{\sigma}(p_Y) \text{ and } T( X^n, Y^n ) \in T_{\mathcal{X} \times \mathcal{Y},n}^{\sigma}(p_{XY}) \right\}.
    \end{equation}
    Using the union bound, we can derive an upper bound on the probability that $K_{\sigma} = 0$:
    \begin{align}
        \Pr\left( K_\sigma = 0 \right) &\le ( 1 - p_X( T_{\mathcal{X},n}^\sigma( p_X ) ) ) + ( 1 - p_Y( T_{\mathcal{Y},n}^\sigma( p_Y ) ) )\\
        &\phantom{11} + ( 1 - p_{XY}( T_{\mathcal{X} \times \mathcal{Y},n}^{\sigma}(p_{XY}) ) )\\
        &\le 3\left( 1 - p^* \right),
    \end{align}
    where, $p^* = \min\left( p( T_{\mathcal{X},n}^\sigma( p_X ) ), p( T_{\mathcal{Y},n}^\sigma( p_Y ) ), p_{XY}( T_{\mathcal{X} \times \mathcal{Y},n}^{\sigma}(p_{XY}) )\right)$. Upon applying Lemma \ref{Lemma:TypicalSetSize}, we then obtain
    \begin{align}
        &3\left( 1 - p^* \right) = 3\exp\left( -n\beta^* + e(n) \right), \text{ where }\lim\limits_{ n \to \infty } \frac{e(n)}{n} = 0,
    \end{align}
    where $\beta^*$ depends on $p_{XY}$ but not $n$. We will refer to this upper bound by $p_{K_\sigma}^c(n)$---i.e.,
    \begin{equation}
        \Pr\left( K_\sigma = 0 \right) \le p_{K_\sigma}^c(n) = 3\exp\left( -n\beta^* + e(n) \right). \label{eq:sigmatypicalset}
    \end{equation}
    Then, returning to $H(X^n|Y^n, U, T(X^n))$, we obtain
    \begin{align}
        &H(X^n|Y^n, U, T(X^n)) \\
        &\le H(X^n, K_\sigma|Y^n, U, T(X^n))\\
        &\le H(X^n|K_\sigma, Y^n, U, T(X^n)) + 1\\
        &= \Pr( K_\sigma = 1 )H(X^n|K_\sigma = 1, Y^n, U, T(X^n)) \nonumber\\
        & + \Pr( K_\sigma = 0 )H(X^n|K_\sigma = 0, Y^n, U, T(X^n)) + 1\\
        &\le H(X^n|K_\sigma = 1, Y^n, U, T(X^n)) + n\log|\mathcal{X}|p_{K_\sigma}^c(n) + 1 \label{eq:TypicalSetSize}\\
        &= \E_{\tilde{Y}^n, \tilde{U}, \tilde{T}|K_\sigma = 1}\left[ H(X^n|K_\sigma = 1, Y^n = \tilde{Y}^n, U = \tilde{U}, T(X^n) = \tilde{T}) \right] \nonumber\\
        &\phantom{11} + n\log|\mathcal{X}|p_{K_\sigma}^c(n) + 1.
    \end{align}
    Here, (\ref{eq:TypicalSetSize}) follows from (\ref{eq:sigmatypicalset}).\\ \\
    Next, consider
    \begin{equation}
        \Delta_{\tilde{Y}^n, \tilde{T}, \tilde{U}} = \frac{1}{n}\E\left[ -\log p_{Y^n|X^n}(Y^n|X^n) \, \Big| \, Y^n = \tilde{Y}^n, T(X^n) = \tilde{T}, U = \tilde{U} \right].
    \end{equation}
    We seek to invoke Lemma \ref{Lemma:MaxEntropyConverse} to derive an upper bound for this quantity. To do this, we need to find an anchor point $(r_0, t_0, d_0)$ that satisfies the conditions for the Lemma. It is easy to observe that setting $r_0 = p_Y$ and $t_0 =p_X$ will suffice when $K_\sigma = 1$. To find a suitable $d_0$, we observe that
    \begin{align}
        \frac{-\log p_{Y^n|X^n}(Y^n|X^n)}{n} \in \left( H(Y|X) - \sigma, H(Y|X) + \sigma \right), \text{ if } K_{\sigma} = 1. 
    \end{align}
    This allows us to select $d_0 = H(Y|X) - \kappa_t$. Therefore, from Lemma \ref{Lemma:MaxEntropyConverse}, for all sufficiently small $\sigma$ and sufficiently large $n$, we obtain
    \begin{align}
        &\E_{\tilde{Y}^n, \tilde{U}, \tilde{T}|K_\sigma = 1}\left[ H(X^n|K_\sigma = 1, Y^n = \tilde{Y}^n, U = \tilde{U}, T(X^n) = \tilde{T}) \right]\\
        &\le \E_{\tilde{Y}^n, \tilde{U}, \tilde{T}|K_\sigma = 1}\left[ \alpha\left(\tilde{Y}^n, \tilde{T}, \Delta_{\tilde{Y}^n, \tilde{T}, \tilde{U}}\right) \right]\\
        &\le \E_{\tilde{Y}^n, \tilde{U}, \tilde{T}|K_\sigma = 1}\Big[ n\Delta_{\tilde{Y}^n, \tilde{T}, \tilde{U}} + nH_{\mathcal{X},n}(\tilde{T}) - nH_{\mathcal{Y},n}(R) - \frac{|\mathcal{X}|}{2}\log n \\
        &\phantom{\le \E_{\tilde{Y}^n, \tilde{U}, \tilde{T}|K_\sigma = 1}\Big[} + \eta_{\rho}^*(n) + c_1 \log \log n + 2\delta_{\text{max}} + 1 \Big],
    \end{align}
    where $R = T(\tilde{Y}^n)$. We proceed by taking the expectation of all the terms and using Jensen's inequality:
    \begin{align}
        &\E_{\tilde{Y}^n, \tilde{U}, \tilde{T}|K_\sigma = 1}\Big[ n\Delta_{\tilde{Y}^n, \tilde{T}, \tilde{U}} + nH_{\mathcal{X},n}(\tilde{T}) - nH_{\mathcal{Y},n}(R) - \frac{|\mathcal{X}|}{2}\log n \nonumber\\
        &\phantom{\le \E_{\tilde{Y}^n, \tilde{U}, \tilde{T}|K_\sigma = 1}\Big[} + \eta_{\rho}^*(n) + c_1 \log \log n + 2\delta_{\text{max}} + 1\Big]\\
        &= \E_{\tilde{Y}^n, \tilde{U}, \tilde{T}|K_\sigma = 1}\Big[ n\Delta_{\tilde{Y}^n, \tilde{T}, \tilde{U}} + nH_{\mathcal{X},n}(\tilde{T}) - \frac{|\mathcal{X}|}{2}\log n + \eta_{\rho}^*(n)  \nonumber\\
        &\phantom{\le \E_{\tilde{Y}^n } }  + c_1 \log \log n + 2\delta_{\text{max}} + 1\Big] - \E_{\tilde{Y}^n|K_\sigma = 1}\Big[ nH_{\mathcal{Y},n}(R) \Big]\\
        &\le \E_{\tilde{Y}^n, \tilde{U}, \tilde{T}}\Big[ n\Delta_{\tilde{Y}^n, \tilde{T}, \tilde{U}} + nH_{\mathcal{X},n}(\tilde{T}) - \frac{|\mathcal{X}|}{2}\log n + \eta_{\rho}^*(n)  \nonumber\\
        &\phantom{\le \E_{\tilde{Y}^n } }  + c_1 \log \log n + 2\delta_{\text{max}} + 1\Big] - \E_{\tilde{Y}^n|K_\sigma = 1}\Big[ nH_{\mathcal{Y},n}(R) \Big]\\
        &\phantom{\le \E_{\tilde{Y}^n } } + n\log|\mathcal{X}|p_{K_\sigma}^c(n) \\
        &\le nH(Y|X) + nH(X) - \frac{|\mathcal{X}|}{2}\log n - \E_{\tilde{Y}^n|K_\sigma = 1}\left[ nH_{\mathcal{Y},n}(R) \right] \nonumber\\
        &\phantom{11} + \eta_{\rho}^*(n) + c_1 \log \log n + 2\delta_{\text{max}} + n\log|\mathcal{X}|p_{K_\sigma}^c(n) + 1. \label{eq:HRLowerBound}
    \end{align}
    Next, we bound $\E_{\tilde{Y}^n|K_\sigma = 1}\left[ nH_{\mathcal{Y},n}(R) \right]$ using the Taylor remainder theorem. For any $R \in T_{\mathcal{Y},n}^\sigma$, we have
    \begin{align}
        H_{\mathcal{Y},n}(R) &= H_{\mathcal{Y},n}(p_Y) + \nabla H_{\mathcal{Y},n}(p_Y) \cdot ( R - p_Y ) + \frac{1}{2} ( R - p_Y )^T \nabla^2H_{\mathcal{Y},n}(\hat{p})( R - p_Y ), \label{eq:HRTaylor}
    \end{align}
    for some distribution $\hat{p}$ which is a convex combination of $R$ and $p_Y$. Therefore, because $R \in T_{\mathcal{Y},n}^\sigma$, it necessarily follows that $\hat{p} \in T_{\mathcal{Y},n}^{\sigma}$. For sufficiently small $\sigma$, we then have
    \begin{align}
        \nabla^2H_{\mathcal{Y},n}(\hat{p})_{i,j} &=  -\frac{1}{\hat{p}(j)} \cdot 1( i = j )\\
        &\ge -\frac{1}{p_Y(j) - \sigma^2} \cdot 1( i = j )\\
        &\ge -M_{\sigma} \cdot 1( i = j ),
    \end{align}
    where $M_\sigma$ is positive and depends only on $p_Y$ and $\sigma$. Substituting this in (\ref{eq:HRTaylor}), we obtain
    \begin{align}
        &H_{\mathcal{Y},n}(R) \ge H_{\mathcal{Y},n}(p_Y) + \nabla H_{\mathcal{Y},n}(p_Y) \cdot ( R - p_Y ) - \frac{M_\sigma}{2} \|R - p_Y \|_2^2.
    \end{align}
    For the expected value, we get
    \begin{align}
        &\E_{\tilde{Y}^n|K_\sigma = 1}\left[ H_{\mathcal{Y},n}(R) \right] \\
        &\ge \E_{\tilde{Y}^n|K_\sigma = 1}\left[ H_{\mathcal{Y},n}(p_Y) + \nabla H_{\mathcal{Y},n}(p_Y) \cdot ( R - p_Y ) - \frac{M_\sigma}{2} \|R - p_Y \|_2^2 \right]\\
        &\ge H_{\mathcal{Y},n}(p_Y) - \frac{M_\sigma}{2} \E_{\tilde{Y}^n}\left[ \|R - p_Y \|_2^2 \right] - n \log |\mathcal{Y}| p_{K_\sigma}^c(n)
    \end{align}
    The $\E_{\tilde{Y}^n}\left[ \|R - p_Y \|_2^2 \right]$ term above is the sum of the variances of binomial random variables normalized by $n$. Using the corresponding standard result, we obtain
    \begin{align}
        &\E_{\tilde{Y}^n|K_\sigma = 1}\left[ H_{\mathcal{Y},n}(R) \right]\\
        &\ge H_{\mathcal{Y},n}(p_Y) - \frac{\tilde{M}_\sigma}{n} - n \log |\mathcal{Y}| p_{K_\sigma}^c(n),
    \end{align}
    where $\tilde{M}_\sigma$ depends on $\sigma$ and $p_Y$. Substituting this back in (\ref{eq:HRLowerBound}), we obtain the required bound for $H(X^n|Y^n, U, T(X^n))$:
    \begin{align}
        &H(X^n|Y^n, U, T(X^n)) \\
        &\le  nH(Y|X) + nH(X) - nH(Y) - \frac{|\mathcal{X}|}{2} \log n  + \frac{\tilde{M}_\sigma}{n} + n \log |\mathcal{Y}| p_{K_\sigma}^c(n) \nonumber \\
        & \phantom{11} + \eta_{\rho}^*(n) + c_1 \log \log n + 2\delta_{\text{max}} + 2n\log|\mathcal{X}|p_{K_\sigma}^c(n) + 2\\
        &= nH(X) - nI(X;Y) - \frac{|\mathcal{X}|}{2} \log n + \mathcal{\epsilon}^*(n),
    \end{align}
    where we collect all lower order terms into $\mathcal{\epsilon}^*(n)$.\\ \\
    Finally, we can substitute this into (\ref{eq:LBPrelStep}) to obtain
    \begin{align}
        nR &\ge nH(X) - \frac{|\mathcal{X}| - 1}{2}\log n - H(X^n|Y^n, U, T(X^n)) - c_{b1}\\
        &\ge nH(X) - \frac{|\mathcal{X}| - 1}{2}\log n - \left( nH(X) - nI(X;Y) - \frac{|\mathcal{X}|}{2} \log n + \mathcal{\epsilon}^*(n) \right) - c_{b1}\\
        &= nI(X;Y) + \frac{1}{2}\log n - \mathcal{\epsilon}^*(n) - c_{b1}.
    \end{align}
\end{proof}
\nocite{*}
\bibliography{refs}
\bibliographystyle{IEEEtran}
\newpage
\appendix
\begin{lemma} \label{lemma:GeomRV}
    Let $X$ be a geometric random variable with success probability $p$ and mean $\mu = \frac{1}{p}$---i.e.,
    \begin{align}
        \Pr( X = k ) = ( 1 - p )^{k-1}p
    \end{align}
    Then,
    \begin{equation}
        H(X) \le \log \mu + \log(e).
    \end{equation}
\end{lemma}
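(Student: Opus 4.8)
The plan is to evaluate $H(X)$ in closed form and then control the resulting expression with the elementary inequality $\ln(1+x)\le x$. Since there is no nontrivial obstacle here, the "hard part'' is merely bookkeeping: getting the closed form right and isolating the single place where an inequality is needed.

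First I would write the entropy as an expectation. Because $\Pr(X=k)=(1-p)^{k-1}p$, we have $-\log \Pr(X=k) = -\log p-(k-1)\log(1-p)$, so taking the expectation and using $\E[X]=\mu=1/p$ gives
\[
H(X) \;=\; \log\frac{1}{p} \;-\;(\mu-1)\log(1-p) \;=\; \log\mu \;-\;(\mu-1)\log(1-p).
\]
Note $\mu\ge 1$ since $p\le 1$. If $\mu=1$ then $X\equiv 1$, so $H(X)=0\le \log\mu+\log(e)$ trivially, and we may assume $\mu>1$ from here on.

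Next I would substitute $1-p=(\mu-1)/\mu$ and bound the second term:
\[
-(\mu-1)\log(1-p) \;=\; (\mu-1)\log\!\Bigl(1+\tfrac{1}{\mu-1}\Bigr) \;=\; \frac{\mu-1}{\ln 2}\,\ln\!\Bigl(1+\tfrac{1}{\mu-1}\Bigr) \;\le\; \frac{\mu-1}{\ln 2}\cdot\frac{1}{\mu-1} \;=\; \frac{1}{\ln 2} \;=\; \log(e),
\]
where the inequality is $\ln(1+x)\le x$ applied with $x=1/(\mu-1)>0$. Combining the two displays yields $H(X)\le \log\mu+\log(e)$, which is the claim.

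The whole argument is elementary; the only ingredients are the closed-form entropy of the geometric law and the bound $\ln(1+x)\le x$. One could instead invoke the fact that the geometric distribution maximizes entropy among distributions on $\{1,2,\dots\}$ with mean $\mu$ and then estimate that maximal value, but the direct computation above is shorter and self-contained.
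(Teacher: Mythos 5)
Your proof is correct and follows essentially the same route as the paper's: compute the closed-form entropy $H(X)=\log\mu+(\mu-1)\log\bigl(1+\tfrac{1}{\mu-1}\bigr)$ and then bound the second term by $\log e$ using $\ln(1+x)\le x$. The only differences are cosmetic (you phrase the first step as an expectation of $-\log\Pr(X=k)$ rather than writing the sum out, and you explicitly dispose of the degenerate case $\mu=1$, which the paper glosses over).
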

\begin{proof}
    From the definition of Shannon entropy, we obtain
    \begin{align}
        H(X) &= -\sum\limits_{k=1}^\infty p( 1 - p )^{k-1} \log \left( p( 1 - p )^{k-1} \right)\\
        &= -\log p \cdot \sum\limits_{k=1}^\infty p( 1 - p )^{k-1} - \log ( 1 - p) \sum\limits_{k=1}^\infty (k-1)p( 1 - p )^{k-1}\\
        &= -\log p - \log (1-p) \left( \mu - 1\right)\\
        &= -\log \frac{1}{\mu} - \log \left(1-\frac{1}{\mu} \right) \left( \mu - 1\right)\\
        &= \log \mu + ( \mu - 1 ) \log \left( \frac{\mu}{\mu - 1}\right)\\
        &= \log \mu + ( \mu - 1 ) \log \left( 1 + \frac{1}{\mu-1}\right)\\
        &= \log \mu + \frac{1}{\ln 2}( \mu - 1 ) \ln \left( 1 + \frac{1}{\mu-1}\right)\\
        &\le \log \mu + \frac{1}{\ln 2} \label{eq:logbound}\\
        &= \log \mu + \log(e).
    \end{align}
    We obtain (\ref{eq:logbound}) by noting that $\ln(1+x) \le x$ for all $x > -1$.
\end{proof}

\begin{lemma} \label{lemma:RNFactorization}
    Let $P_{XY} = P_XP_{Y|X}$ and $Q_{XY}=Q_XQ_{Y|X}$ be probability measures on $\mathcal{X} \times \mathcal{Y}$. We assume that $P_{XY} \ll Q_{XY}$ and also that $P_{Y|X}(\cdot|x) \ll Q_{Y|X}(\cdot|x)$ for every $x$. Then, for $Q_{XY}$-almost-every $(x, y) \in \mathcal{X} \times \mathcal{Y}$ we have
    \begin{align}
        \frac{dP_{XY}}{dQ_{XY}}(x,y) = \frac{dP_{X}}{dQ_{X}}(x)\frac{dP_{Y|X}}{dQ_{Y|X}}(y|x).
    \end{align}
\end{lemma}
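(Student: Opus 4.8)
The plan is to exhibit the right-hand side as an explicit candidate density and then verify, via a $\pi$--$\lambda$ argument on measurable rectangles, that it agrees $Q_{XY}$-almost everywhere with $\frac{dP_{XY}}{dQ_{XY}}$. First I would note that the marginal derivative is well defined: if $Q_X(A)=0$ then $Q_{XY}(A\times\mathcal{Y})=0$, hence $P_{XY}(A\times\mathcal{Y})=P_X(A)=0$, so $P_X\ll Q_X$ and $\frac{dP_X}{dQ_X}$ exists. Since $\mathcal{Y}$ is standard Borel (being Polish, as in the problem setup), one can pick a jointly measurable map $(x,y)\mapsto r(x,y)$ such that, for every $x$, $r(x,\cdot)$ is a version of $\frac{dP_{Y|X=x}}{dQ_{Y|X=x}}$. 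Setting
\[
  h(x,y) = \frac{dP_X}{dQ_X}(x)\, r(x,y),
\]
this $h$ is a nonnegative, jointly measurable function, and it is the natural candidate for $\frac{dP_{XY}}{dQ_{XY}}$.

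Next I would check that $h$ reproduces $P_{XY}$ on the $\pi$-system of measurable rectangles. For measurable $A\subset\mathcal{X}$, $B\subset\mathcal{Y}$, the defining property of the composition $Q_{XY}=Q_XQ_{Y|X}$, the defining property of $r(x,\cdot)$, and that of $\frac{dP_X}{dQ_X}$ give, in order,
\begin{align*}
  \int_{A\times B} h\, dQ_{XY}
   &= \int_A \frac{dP_X}{dQ_X}(x)\left(\int_B r(x,y)\, Q_{Y|X}(dy|x)\right) Q_X(dx) \\
   &= \int_A \frac{dP_X}{dQ_X}(x)\, P_{Y|X}(B|x)\, Q_X(dx) \\
   &= \int_A P_{Y|X}(B|x)\, P_X(dx) = P_{XY}(A\times B).
\end{align*}
Thus the two finite measures $E\mapsto\int_E h\, dQ_{XY}$ and $E\mapsto P_{XY}(E)$ coincide on all measurable rectangles, which form a $\pi$-system generating the product $\sigma$-algebra; Dynkin's $\pi$--$\lambda$ theorem then forces them to coincide on the whole product $\sigma$-algebra. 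Hence $h = \frac{dP_{XY}}{dQ_{XY}}$ $\,Q_{XY}$-a.e., which is the assertion of the lemma.

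The one genuinely non-routine step is producing a jointly measurable version $(x,y)\mapsto r(x,y)$ of the conditional Radon--Nikodym derivative: an arbitrary pointwise choice of versions $\frac{dP_{Y|X=x}}{dQ_{Y|X=x}}$ need not depend measurably on $x$, and without joint measurability the product $h(x,y)$ is not even well posed. This is exactly where the standard-Borel structure of $\mathcal{Y}$ is used, and it can be discharged by the standard construction (approximating by the finite $\sigma$-algebras generated by a countable separating family of sets and passing to a limit, or by invoking the measurable version of the Radon--Nikodym theorem for transition kernels). Once such an $r$ is fixed, everything else is the elementary rectangle computation above together with the uniqueness of measures agreeing on a generating $\pi$-system, so I expect the measurable-selection point to be the only place requiring care.
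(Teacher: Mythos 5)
Your proposal is correct and conceptually the same as the paper's argument: both identify the key technical point (a jointly measurable version of the conditional Radon--Nikodym derivative, available because $\mathcal{Y}$ is standard Borel), both verify that the candidate density $\frac{dP_X}{dQ_X}(x)\frac{dP_{Y|X}}{dQ_{Y|X}}(y|x)$ integrates against $Q_{XY}$ to give $P_{XY}$, and both close by uniqueness of the Radon--Nikodym derivative. The only procedural difference is how the verification is carried out. You restrict to measurable rectangles $A\times B$, where the iterated integral is clean, and then extend to the full product $\sigma$-algebra by Dynkin's $\pi$--$\lambda$ theorem. The paper instead fixes an arbitrary measurable $A\subset\mathcal{X}\times\mathcal{Y}$, writes $P_{XY}(A)=\int P_X(dx)\int_{A_x}P_{Y|X}(dy|x)$ via the sections $A_x$, and substitutes Radon--Nikodym derivatives into each layer, citing the Doob (measurable) version of the Radon--Nikodym theorem for the inner kernel; this handles general $A$ directly and so skips the $\pi$--$\lambda$ extension step. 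These are interchangeable: the sections approach is marginally more direct, while the rectangle-plus-$\pi$--$\lambda$ route isolates the measurability issue at the rectangle level where it is most transparent. In either case the genuinely non-routine ingredient is the one you flagged, and the rest is bookkeeping.
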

\begin{proof}
    Consider any measurable $A \subset \mathcal{X} \times \mathcal{Y}$. For each $x \in \mathcal{X}$, define 
    \begin{equation}
        A_x = \{ y| (x,y) \in A \}.
    \end{equation}
    Then, we have
    \begin{align}
        P_{XY}(A) &= \int\limits_{\mathcal{X}} P_X(dx) P_{Y|X}(A_x|x)\\
        &= \int\limits_{\mathcal{X}} P_X(dx) \int_{A_x}P_{Y|X}(dy|x).
    \end{align}
    Then, using the Radon-Nikodym Theorem for $P_X$ and $Q_X$, we obtain
    \begin{align}
        \int\limits_{\mathcal{X}} P_X(dx) \int_{A_x}P_{Y|X}(dy|x) = \int\limits_{\mathcal{X}} \frac{dP_X}{dQ_X}(x)Q_X(dx) \int_{A_x}P_{Y|X}(dy|x).
    \end{align}
    Next, we apply the Doob version of the Radon-Nikodym Theorem (Theorem 2.12, Polyanskiy and Wu \cite{YWu}) to obtain
    \begin{align}
        &\int\limits_{\mathcal{X}} \frac{dP_X}{dQ_X}(x)Q_X(dx) \int_{A_x}P_{Y|X}(dy|x) \\
        &= \int\limits_{\mathcal{X}} \frac{dP_X}{dQ_X}(x)Q_X(dx) \int_{A_x} \frac{dP_{Y|X}}{dQ_{Y|X}}(y|x)Q_{Y|X}(dy|x).
    \end{align}
    Simplifying, we obtain
    \begin{align}
         P_{XY}(A) &= \int_{A} \frac{dP_X}{dQ_X}(x) \frac{dP_{Y|X}}{dQ_{Y|X}}(y|x) Q_{XY}(dx, dy).
    \end{align}
    Using the uniqueness of $\frac{dP_{XY}}{dQ_{XY}}$ up to sets with zero $Q_{XY}$ measure, we obtain the required result.
\end{proof}
\end{document}